\documentclass[a4paper,12pt, twoside]{article} 

\usepackage{times}
\usepackage{soul}
\usepackage{url}
\usepackage[hidelinks]{hyperref}
\usepackage[utf8]{inputenc}
\usepackage[small]{caption}
\usepackage{graphicx}
\usepackage{amsthm}
\usepackage{booktabs}
\usepackage{algorithm}
\usepackage{algorithmic}
\usepackage{subfig}
\usepackage{microtype}
\usepackage{xspace}
\usepackage{amsmath}

\usepackage[fixamsmath,disallowspaces]{mathtools}

\usepackage{microtype}

\usepackage{comment}
\usepackage{macros}
\usepackage{multicol}
\usepackage{multirow}

\newtheorem{theorem}{Theorem}
\newtheorem{proposition}[theorem]{Proposition}
\newtheorem{corollary}[theorem]{Corollary}
\newtheorem{example}[theorem]{Example}
\newtheorem{lemma}[theorem]{Lemma}
\newtheorem{claim}[theorem]{Claim}
\newtheorem{definition}[theorem]{Definition}
\newtheorem{remark}[theorem]{Remark}

\usepackage{enumerate}
\usepackage{xcolor}

\usepackage{tikz}
\usetikzlibrary{arrows}
\usepackage{todonotes}

\usepackage{array}
\newcolumntype{H}{>{\setbox0=\hbox\bgroup}c<{\egroup}@{}}

\usepackage{thmtools}
\usepackage{thm-restate}
\usepackage{cleveref}

\usepackage{apptools}
\usepackage{xpatch}
\makeatletter
\xpatchcmd{\thmt@restatable}
{\csname #2\@xa\endcsname\ifx\@nx#1\@nx\else[{#1}]\fi}
{\IfAppendix{\csname #2\@xa\endcsname}{\csname #2\@xa\endcsname\ifx\@nx#1\@nx\else[{#1}]\fi}}
{}{} 
\makeatother
\usepackage{authblk}

\usepackage{pgf,tikz}
\usetikzlibrary{fit,positioning,arrows,shapes, decorations,automata,%
	petri,%
	topaths,calc}
\tikzstyle{dashedarrow} = [-stealth',dashed]
\tikzstyle{tdnode} = [draw,rounded corners,top color=vertexTopColor,bottom color=vertexBottomColor,minimum size=1.5em]
\tikzstyle{stdnode} = [tdnode, font=\scriptsize]
\tikzstyle{stdnodecompact} = [stdnode, inner sep = 1.5pt, outer sep = 0.1pt]
\tikzstyle{stdnodetable} = [stdnode, inner sep = 1.5pt, outer sep = 0]
\tikzstyle{stdnodenum} = [minimum size=1.5em, font=\scriptsize]
\tikzstyle{tdedge} = [-,draw,thick]
\tikzstyle{tdlabel} = [draw=none, rectangle, fill=none, inner sep=0pt, font=\scriptsize]

\newcommand{\futuresketch}[1]{}

\usepackage{xcolor}
\colorlet{vertexTopColor}{white}

\colorlet{vertexBottomColor}{black!10}

\allowdisplaybreaks

\newcommand{\lefttriangle}{\hfill$\triangleleft$}
\AtEndEnvironment{claimproof}{\filledsquare}
\AtEndEnvironment{example}{\lefttriangle}

\usepackage{enumerate,enumitem}

\makeatletter
\providecommand*{\cupdot}{%
	\mathbin{%
		\mathpalette\@cupdot{}%
	}%
}
\newcommand*{\@cupdot}[2]{%
	\ooalign{%
		$\m@th#1\cup$\cr
		\hidewidth$\m@th#1\cdot$\hidewidth
	}%
}
\makeatother

\usepackage{fancyhdr}
\begin{document} 

\title{Inconsistent Databases and Argumentation Frameworks with Collective Attacks}


\author[1]{Yasir Mahmood\thanks{\texttt{yasir.mahmood@uni-paderborn.de}}}
\author[2,3]{Jonni Virtema\thanks{\texttt{jonni.virtema@glasgow.ac.uk}}}
\author[4]{Timon Barlag\thanks{\texttt{barlag@thi.uni-hannover.de}}}
\author[1]{Axel-Cyrille Ngonga Ngomo\thanks{\texttt{axel.ngonga@uni-paderborn.de}}}


\affil[1]{Data Science Group, Heinz Nixdorf Institute, Paderborn University, Germany}
\affil[2]{School of Computing Science, University of Glasgow, United Kingdom}
\affil[3]{School of Computer Science, University of Sheffield, United Kingdom}
\affil[4]{Institut f\"ur Theoretische Informatik, Leibniz Universit\"at Hannover, Germany}

\date{\vspace{-5ex}}  

\maketitle

\begin{abstract}
	The connection between subset-maximal repairs for inconsistent databases involving various integrity constraints and acceptable sets of arguments within argumentation frameworks has recently drawn growing interest. 
	In this paper, we contribute to this domain by establishing a new connection when integrity constraints (ICs) include denial constraints and local-as-view tuple-generating dependencies. 
	It turns out that SET-based Argumentation Frameworks (SETAFs), an extension of Dung's argumentation frameworks (AFs) allowing collective attacks, is needed.
	It is known that subset-maximal repairs under denial constraints correspond to the naive extensions, which also coincide with the preferred and stable extensions in the resulting SETAFs. 
	Our main findings establish that repairs under the considered fragment of tuple-generating dependencies correspond to the preferred extensions. 
	Moreover, for these dependencies, an additional pre-processing allows to compute a unique extension that is stable and naive. 
	Allowing both types of constraints breaks this relationship, and even the pre-processing does not help as only preferred semantics captures these repairs. 
	Finally, while it is known that functional dependencies do not require set-based attacks, we prove the same regarding inclusion dependencies. 
	Thus, one can translate inconsistent databases under these restricted classes of ICs to plain AFs with attacks only between arguments.
\end{abstract}

\noindent\textbf{Keywords:} complexity theory, database repairs, integrity constraints, denial constraints, functional and inclusion dependencies, abstract argumentation, collective attacks.

\section{Introduction}

\newcommand{\an}[1]{\todo[color=yellow]{AN:#1}}

In real-world applications, the provenance of data can be very diverse and include non-trustworthy sources. Thus, databases are often inconsistent in practice when the data does not conform to the imposed integrity constraints. A rich theory has been developed to deal with inconsistent databases. 

Formally, \emph{integrity constraints} (ICs) describe fundamental structural properties of data which should be satisfied for a database to be considered consistent. ICs are often formulated in some logic; most notably in fragments of first-order logic. Two of the most fundamental classes of ICs are key constraints (indicating that the value of a given collection of attributes uniquely determine the whole record in the database table) and inclusion dependencies (that express that all values for a given attribute occur as values of another given attribute). Together key constraints and inclusion dependencies can express foreign key constraints. The literature for database dependencies considers a whole hierarchy of dependencies in increasing generality (see e.g., \cite{DBLP:books/aw/AbiteboulHV95}).  Key dependencies are special cases of functional dependencies (FDs), which are special cases of equality-generating dependencies, which again generalize to denial constraints (DCs). Likewise, inclusion dependencies generalize to local-as-view tuple-generating dependencies (LTGDs), which are special cases of tuple-generating dependencies. All of the aforementioned dependencies can be defined as particular syntactic fragments of first-order logic (see Section \ref{sec:preli} for formal definitions). 
In this paper, we examine expressive integrity constraints, specifically DCs and LTGDs, alongside their less expressive counterparts, FDs and  IDs.

One of the main approaches for handling inconsistency is \emph{database repairing}. The goal is to identify and \emph{repair} inconsistencies in data to obtain a consistent database that satisfies the imposed constraints. In the usual approaches, one would search for a database that satisfies the given constraints and differs minimally from the original database; the obtained database is called a \emph{repair} of the original.
Some of the most prominent notions of repairs are set-based repairs~\cite{tenCate:2012,barcelo2017data}, attribute-based repairs~\cite{Wijsen:2003}, and cardinality-based repairs~\cite{LopatenkoB07}.
Dung's abstract argumentation framework~\cite{Dung95a} has been specifically designed to model conflicts and support relationships among arguments. 
An abstract argumentation framework (AF) represents arguments and their conflicts through directed graphs and allows for a convenient exploration of the conflicts at an abstract level. 
Furthermore, often it is necessary to consider conflicts not only between individual arguments, but between sets of arguments. 
This necessity has given rise to the so-called \emph{set-based argumentation frameworks} (SETAFs) \cite{NielsenP06}, which allow modelling of collective attacks. 

Argumentation frameworks have been explored extensively 
for representing, and reasoning with, inconsistent knowledge bases (KBs), covering Datalog$^{\pm}$ 
and description logics (see e.g., \cite{AriouaC16,AriouaCV17,AriouaTC15,YoungMR16,YunVC20,BienvenuB20} and \cite{ArieliBH19} for an overview). 
The common goal in each of these works is to formally establish a connection between inconsistent KBs and AFs such that the argumentation machinery then outputs extensions equivalent to the set of repairs of the KB. 
Nevertheless, in the setting of relational databases and integrity constraints, there is still a gap with respect to how or whether a connection between inconsistent databases and AFs can be established.
To the best of our knowledge, only functional dependencies (or their generalization, denial constraints) have been investigated in the context of AFs (resp., SETAFs) in~\cite{BienvenuB20}.
We expand this area of research by establishing further connections between repairs and abstract argumentation frameworks when further ICs are allowed. 



In this paper, we focus on subset repairs of relational databases when the ICs are denial constraints and local-as-view tuple-generating dependencies.
The first problem of interest to us is establishing a connection between repairs for the inconsistent DBs and extensions in SETAFs.
We show how subset-maximal repairs under a set of DCs and LTGDs can be obtained by computing the naive, preferred, or stable extensions (see Section~\ref{sec:preli} for definitions) in the related SETAFs.
Repairs under DCs correspond to the naive extensions, which also coincide with the preferred and stable extensions in the resulting SETAFs. 
For LTGDs, the correspondence requires preferred semantics.
However, a polynomial time pre-processing on the resulting SETAFs results in its unique extension that is also stable and naive.
Allowing both types of ICs breaks this relationship between extensions and only preferred semantics captures the repairs.
Our second contribution highlights that when one restricts DCs to FDs and LTGDs to IDs, respectively, a similar connection as for the expressive ICs hold, however, now the resulting SETAFs are in fact AFs (with attacks only between arguments, rather than collective attacks).
We are also interested in the computational problems of deciding the existence of a repair, and determining whether a given fact belongs to some (or every) repair and consider their complexity. 
See Table \ref{table:cont} for the complexity results.

By employing Dung's AFs (and their generalizations, SETAFs) to model repairs of a relational database, one can effectively abstract away from the detailed content of individual entries in the database and focus solely on their relationships with other entries.
This approach provides a clearer understanding of why specific records either appear or do not appear in a repair, as well as the reasons certain values may be absent from query answers.
Furthermore, this modeling approach allows for the incorporation of additional information about records, such as priorities among them, directly at an abstract level. 

\paragraph{Related Work}
The problem of computing subset-maximal repairs and its complexity has been explored extensively in the database setting~\cite{AfratiK09,ArenasBC01,CHOMICKI200590,HannulaW22,LivshitsKR20,StaworkoC10} (see \cite{Bertossi06,Bertossi19} for an overview).
The notions of conflict graphs and hypergraphs have been introduced before for certain families of ICs~\cite{kimelfeld-17,KimelfeldLP20,StaworkoCM12}.
In particular, a correspondence between repairs and subset-maximal independent sets of the conflict graphs for FDs~\cite{ArenasBC01} and DCs~\cite{CHOMICKI200590} has been established. 
Moreover, a recent work defines conflict (hyper)graphs for a richer setting of universal constraints considering symmetric difference repairs~\cite{BienvenuB23}.
Notice that the connection between repairs and independent sets also yields a correspondence between repairs and the naive extensions when the conflict graph is seen as an argumentation framework. 
Nevertheless, to the best of our knowledge, no work has considered a similar graph representation when LTGDs (or even IDs) are taken into account.
Hannula and Wijsen~\cite{HannulaW22} addressed the problem of consistent query answering with respect to primary and foreign keys. 
Their setting allows the insertion of new tuples to fulfill foreign key constraints rather than only deleting.

The case of unirelational databases is also connected to the team-semantics literature~\cite{vaananen07}. Team semantics is a logical framework where formulae are evaluated over unirelational databases (teams in their terminology). 
In this setting, the complexity of finding maximal satisfying subteams has been studied by Hannula and Hella~\cite{HannulaH22} for inclusion logic formulas and by Mahmood~\cite{Mthesis23} for propositional dependence logic. 
In the team-semantics literature, FDs are known as \emph{dependence} atoms and IDs as \emph{inclusion} atoms, denoted respectively as $\depa{x}{y}$ and $\inca{x}{y}$. \footnote{We borrow this notation and write $\depa{x}{y}$ and $\inca{x}{y}$ for FDs and IDs, respectively.}

Our work differs from the previous work, since it combines denial constraints (a subclass of universal constraints) and local-as-view tuple-generating dependencies (a subclass of tuple-generating dependencies).
Moreover, one of our main contributions lies in connecting repairs under DCs (resp., FDs) and LTGDs (IDs) to the extensions of set-based (plain) argumentation frameworks in Dung's setting~\cite{Dung95a}.
A similar connection between (SET)AFs and 
preferred repairs has been explored in the context of prioritized description logic~\cite{BienvenuB20} and Datalog$^{\pm}$ knowledge bases~\cite{AriouaTCB14,CroitoruTV15,CroitoruV13,HoAASA22}.
Employing abstract argumentation, we utilize facts in a database as arguments, aligning with the approach presented in~\cite{BienvenuB20}.
In fact, the results for DCs and FDs regarding preferred and stable semantics follow as a corollary from the work by~\cite{BienvenuB20} since repairs in the prioritized setting generalize subset repairs.
In contrast, Croitoru et al.~\cite{CroitoruTV15,CroitoruV13} take an orthogonal approach in the datalog setting, employing structured argumentation to construct arguments from a given KB.

Regarding the expressivity comparison between various formalisms, \cite{konig2022just} provided several translations between different argumentation formalisms, namely Assumption Based Argumentation (ABA) \cite{bondarenko1997abstract}, Claim-Augmented
Frameworks (CAF)~\cite{dvovrak2020complexity}, and Argumentation Frameworks with Collective Attacks
(SETAF)~\cite{NielsenP06}.
This puts our work in a broader context by presenting novel translations from databases with expressive integrity constraints to SETAFs.
Furthermore, recently \cite{DBLP:conf/aaai/0002HN25} presented translations from AFs to inconsistent databases which is the converse direction to what we aimed to achieve in this paper. 
Their main findings indicate that one can translate argument interaction in an AF via functional and inclusion dependencies alone.

\paragraph{Prior Work} A preliminary version of this work was published in FoIKS 2024~\cite{mahmood2024computing}.
The current paper expands the earlier work to more expressive integrity constraints.
To be precise, we utilize SETAFs to simulate denial constraints and LTGDs using the same ideas as for functional and inclusion dependencies. The conference version focussed on functional and inclusion dependencies in connections to plain AFs.
The results for DCs follow from the work of \cite{BienvenuB20}, thus the task is to extend them to LTGDs.
The results of the conference version 
can now be derived as special cases from the more general results of this paper, when the involved ICs are FDs and IDs.

\begin{table}[t]
	\centering
	\scalebox{0.8}{
		\begin{tabular}{l@{\; }c@{\; }c@{\; }c@{\; }c@{\; }c@{\; }} 
			\toprule
			\multicolumn{6}{c}{Equivalence between DBs with \textbf{expressive ICs} and SETAFs, along with \textbf{data complexity} results.} \\
			\midrule
			& ICs & SETAF-equivalent semantics &  \multicolumn{3}{c}{Complexity Results} \\
			& &\multicolumn{1}{c}{for $\repairs$} & $\REPB$ & $\somerepairB$ & $\allrepairB$  \\ \midrule 
			& DCs & $\sigma \in \{\naive,\pref,\stab\}^\star$ 
			& (trivial) & (trivial) & $\in\Ptime$ \\
			& LTGDs & $\pref$ \scriptsize{(Thm. \ref{thm:ext-tgds})} & $\in\Ptime$\textsuperscript{\cite{AfratiK09}} & $\in\Ptime$\textsuperscript{\cite{AfratiK09}} & $\in\Ptime$\textsuperscript{\cite{AfratiK09}} \\
			& DCs+LTGDs & $\pref$ \scriptsize{(Thm. \ref{thm:ext-both-setafs})} & $\NP$ \scriptsize{(Thm. \ref{thm:cred-both-extended})} & $\NP$ \scriptsize{(Thm. \ref{thm:cred-both-extended})} & $\PiP$  \scriptsize{(Thm. \ref{thm:skep-both-extended})} \\
			\\ 
			\toprule 
			\multicolumn{6}{c}{Equivalence between DBs with \textbf{less expressive ICs} and AFs, along with \textbf{combined complexity} results.} \\
			\midrule
			
			& ICs & AF-equivalent semantics &  \multicolumn{3}{c}{Complexity Results} \\
			& &\multicolumn{1}{c}{for $\repairs$} & $\REP$ & $\somerepair$ & $\allrepair$  \\ \midrule 
			
			& FDs & $\sigma \in \{\naive,\pref,\stab\}^\star$ 
			& (trivial) & (trivial) & $\in\Ptime$ \\
			& IDs & $\pref$ \scriptsize{(Cor. \ref{thm:ext-inc})} & $\in\Ptime$\textsuperscript{\cite{AfratiK09}} & $\in\Ptime$\textsuperscript{\cite{AfratiK09}} & $\in\Ptime$\textsuperscript{\cite{AfratiK09}} \\
			& FDs+IDs  & $\pref$ \scriptsize{(Cor. \ref{thm:ext-both})} & $\NP$ \scriptsize{(Thm. \ref{thm:cred-both-fixed})} & $\NP$ \scriptsize{(Thm. \ref{thm:cred-both-fixed})} & $\PiP$  \scriptsize{(Thm. \ref{thm:skep-both})} \\
			\bottomrule
	\end{tabular}}
	%
	\caption{Overview of our main contributions. The complexity results depict completeness, unless specified otherwise. 
		The table at the top indicates results for 
		expressive ICs involving a fixed set $B$ of DCs and LTGDs, 
		whereas the table at the bottom depicts results for less expressive ICs including FDs and IDs.
		{The lower bounds for less expressive ICs also hold for data complexity since the involved reductions use fixed sets of constraints.}
		The second column in each table indicates the (SET)AF-semantics corresponding to repairs for ICs in the first column, and the last three columns present the complexity of each problem. The $\Ptime$-results are already known in the literature, whereas the remaining results are new. Finally, the results marked by $\star$ follow from the earlier work by \cite{BienvenuB20} and \cite{CHOMICKI200590}.}\label{table:cont}
\end{table}

\section{Preliminaries}\label{sec:preli}
We assume that the reader is familiar with basics of complexity theory and first-order logic.
In particular, we will encounter (1) the complexity classes $\Ptime, \NP$, and $\PiP$, and (2) closed first-order formulas having $\forall.\exists.$ as the quantifier prefixes.
In the following, we shortly recall the necessary definitions from databases and argumentation.

\paragraph{Databases}
A (relational) \emph{schema} $\tau$ is a finite set of relation names. Each relation $T\in\tau$ is associated with a positive integer, called the \emph{arity} of $T$.
 {We consider a countable set $C$ of constants to act as domain elements, from which the values can appear in a database.
	A \emph{term} is either a constant from $C$ or a variable.
	Let $t_1,\dots,t_k$ be terms and $T\in\tau$ be a relation  of arity $k$, the expression $T(t_1,\dots,t_k)$ is called a \emph{relational atom}. 
}A relational atom composed of only  constants is called a \emph{fact}.
Given a schema $\tau$, a \emph{database} $\calT$ over $\tau$ is a finite set of facts using relation symbols from $\tau$.
The \emph{active domain} of a database $\calT$ is denoted as $\dom(\calT)$ and defined as the collection of all the constant values from $C$ that occur in facts of $\calT$.

A fact can also be seen as a database tuple.
 {In this notation, the fact $T(a_1,\dots,a_k)$ corresponds to stating that the tuple $(a_1,\dots,a_k)$ is a record in the \emph{(database) table} $T$ of arity $k$---which is a collection of such tuples, each with an associated unique identifier}.
For analogy, a table corresponds to a relation name $T$ in a schema $\tau$ and 
can be seen as a grouping of facts together that use the same relation symbol (namely, $T$).
 {For a table $T$ of arity $k$, the positions $\{1,\dots,k\}$ in the columns of $T$ are usually given explicit names, resulting in the set $\att(T)$ of attributes of $T$.
	The active domain $\dom(T)$ of $T$ in the tabular view is the collection of all the values that occur in the tuples of $T$.}
We denote individual attributes (or variables) by small letters (e.g., $x$, $y$) and reserve boldface letters (e.g., $\tuple x, \tuple y$) for sequences of attributes.
For an attribute $x\in\att(T)$ and a tuple $s\in T$, $s(x)$ denotes the value taken by $s$ for the attribute $x$. 
For a sequence $\tuple x =(x_1,\ldots,x_k)$, $s(\tuple x)$ denotes the sequence of values $(s(x_1),\ldots, s(x_k))$.
Given this tabular view, a {database} over schema $\tau = \{T_1,\ldots, T_m \}$ is seen as a collection $\calT=(T_1,\dots, T_m)$ of tables corresponding to relation symbols in $\tau$.
 {Again, the active domain $\dom(\calT)$ of a database $\calT$ is the union of active domains of each table $T\in\calT$.} 
We find it convenient to write a database without explicitly mentioning its schema and active domain always.


%
We next present an example database.

\begin{example}
	\label{ex:running_1}
	Consider the schema $\{E,D\}$. 
	The database tables over this schema (Figure~\ref{tab:running_1}) contain data about employees and departments, respectively.
	 {Here, $E(\text{E1}, \text{D1}, \text{Paderborn})$ is a fact and $(\text{E1}, \text{D1}, \text{Paderborn})$ is a tuple in $E$ with identifier $e_1$. For presentation of examples, we refer to tuple identifiers also as facts. Hence, we write $e_1$ for the fact $E(\text{E1}, \text{D1}, \text{Paderborn})$.}
	\begin{table}
		\centering
		\begin{minipage}{0.45\textwidth}
			\begin{tabular}{l@{\hskip 5pt}|@{\hskip5pt}c@{\;}ccc@{\hskip5pt}}
				$E$ & Emp\_ID  & Dept\_ID & Location\\ \hline
				$e_1$ & E1 & D1 & Paderborn\\
				$e_2$ & E2 & D2 & Sheffield\\
				$e_3$ & E3 & D2 & Hanover
			\end{tabular}
		\end{minipage}
		\begin{minipage}{0.45\textwidth}
			\begin{tabular}{l@{\hskip 5pt}|@{\hskip5pt}c@{\;}ccc@{\hskip5pt}}
				$D$ & Dept\_ID  & Dept\_Name & Location\\ \hline
				$d_1$ & D1 & Sales & Paderborn\\
				$d_2$ & D2 & Marketing & Sheffield\\
				$d_3$ & D3 & HR & Hanover
			\end{tabular}
		\end{minipage}
		\caption{A database with two tables over the schema $\{E,D\}$. The first column indicates identifier for each fact/tuple.}
		\label{tab:running_1}
	\end{table}
\end{example}

\paragraph{Integrity Constraints.}
Integrity constraints are closed first-order formulas, i.e., all the variables are bound to quantification.
In the following, we first specify different types of ICs.

Let  $\tau$ be a database schema.
Recall that a {database atom} is an application of any relation name $T\in\tau$ over variables or constants.
Let $\varphi$ and $\psi_i$ be conjunctions of database atoms and let $\beta$ be a quantifier-free formula using only built-in predicates (e.g., equality or inequality).
We call a formula of the form 
\[
\forall \tuple{x} (\varphi(\tuple{x})\land \beta(\tuple x)\rightarrow \bigvee_{i=1}^n \exists \tuple{y}_i \psi_i(\tuple x, \tuple {y}_i))
\]
a generic \emph{first-order integrity constraint} (FO).
All other types of constraint languages arise from restrictions on the generic ICs, as we illustrate next.
A generic IC is called \emph{full} or a \emph{universal constraint} (UC) if it contains no existential quantifiers.
A \emph{denial constraint} (DC) has the form $\forall \tuple x \neg (\varphi(\tuple x)\land \beta (\tuple x)) $, which can also be thought of as a UC with empty right hand side.
An \emph{equality-generating dependency} (EGD) has the form $\forall \tuple x (\varphi(\tuple x)\rightarrow x_i=x_j)$, which can be thought of as a DC where $\beta$ is a single inequality.
Let $T$ be an $n$-ary relation and $I,J\subseteq \{1,\dots,n\}$ be two sets with pairwise distinct variables $x_1,\dots,x_n,y_1,\dots,y_n$.
Then, a \emph{functional dependency} (FD) over $T$ is an EGD of the form $\forall \tuple x \forall \tuple y (T(\tuple x) \land T(\tuple y) \land \bigwedge_{i\in I}x_i =y_i \rightarrow \bigwedge_{j\in J}x_j=y_j)$.
A \emph{key constraint} is a special type of FD requiring $I\cup J = \{1,\dots,n\}$.

A \emph{disjunctive tuple-generating dependency} ($\vee$-TGD) is a generic IC that has empty $\beta$ and an ordinary \emph{tuple-generating dependency} (TGD) additionally requires $n=1$.
A \emph{local-as-view} TGD (LTGD) is a TGD where $\varphi$ is a single atom and an \emph{inclusion dependency} (ID) is an LTGD where $\psi_i$ is also a single atom.
Figure~\ref{fig:ICs} (adapted from \cite{arming2016complexity}) presents an overview of the hierarchy of different ICs together with their syntactic forms.

The semantics for ICs is defined similarly to first-order (FO) formulas.
 {A database $\calT$ over schema $\tau$ gives rise to an FO-structure $\mathcal A$ over the vocabulary $\tau$. The domain of $\mathcal A$ is the active domain $\dom(\calT)$ of $\calT$ 
	and each relation symbol $T\in \tau$ is interpreted as the corresponding database table $T$ in $\calT$, i.e., it contains all the tuples of constants $(a_1,\dots,a_k)$ over $\dom(\calT)$ such that $T(a_1,\dots,a_k)\in \calT$.
	Then, for an IC $\alpha$, we write $\calT\models \alpha$ iff $\mathcal A\models \alpha$ under the classical Tarski semantics.}
Moreover, one often uses a more convenient notation for FDs and IDs as we illustrate next.
Let $T$ be an $n$-ary relation, $I,J\subseteq \{1,\dots,n\}$ be two sets and let $x_1,\dots,x_n,y_1,\dots,y_n$ be pairwise distinct variables.
Then, for sequences $\tuple X = (x_i \mid i\in I)$ and $\tuple Y = (y_j \mid j\in J)$, the {functional dependency} $\forall \tuple x \forall \tuple y (T(\tuple x) \land T(\tuple y) \land \bigwedge_{i\in I}x_i =y_i \rightarrow \bigwedge_{j\in J}x_j=y_j)$ can be equivalently expressed as $\depa{X}{Y}$ or $\tuple X \rightarrow \tuple Y$ (as done usually in the DBs setting).
Moreover, let $T_i$ and $T_j$ be two relations, and $\tuple x_1, \tuple x_2, \tuple x_3$ be sequences of variables.
The inclusion dependency $\forall \tuple x_1 \forall \tuple x_2 (T_i(\tuple x_1,\tuple x_2) \rightarrow \exists \tuple x_3 T_j(\tuple x_2,\tuple x_3))$ can be equivalently expressed as $\incageneral{T_i}{Y}{T_j}{X}$ where $\tuple X$ (respectively, $\tuple Y$) is the set of attributes of $T_j$ ($T_i$) corresponding to $\tuple x_2$.
We find it convenient to write $\tuple Y\subseteq \tuple X$ instead of $T[\tuple Y]\subseteq T[\tuple X]$ for an ID involving a single relation $T$.
We will use this notation throughout our paper.

\begin{figure}
	\centering
	\begin{tikzpicture}[level distance=1.7em, sibling distance=5em, minimum height=1.75em, level 2/.style={sibling distance=3em},
		every node/.style={scale=.9, font=\small},
		edge from parent/.style={thick,-,black, draw}]
		\node (fo) at (0,-.2)  {$\FO$};
		\node (vtgd) at (-1.3,-1.2) {$\vtgd$};
		\node (UC) at (1.3,-1.2)  {$\UC$};
		\node (tgd) at (-1.3,-2.5) {$\tgd$};
		
		\node (fullvtgd) at (0.1,-2.2) {$\fullvtgd$};
		\node (fulltgd) at (0.1,-3.2) {$\fulltgd$};
		
		\node (lavtgd) at (-1.3,-3.4) {$\lavtgd$};
		\node (id) at (-1.3,-4.3) {$\id$};
		
		\node (denial) at (1.3,-2.3) {$\denial$};
		\node (egd) at (1.3,-3.1) {$\egd$};
		\node (fd) at (1.3,-4) {$\fd$};
		\node (key) at (1.3,-4.8) {$\key$};
		
		\foreach \f/\g in {fo/vtgd, fo/UC, UC/denial, denial/egd, egd/fd, fd/key, UC/fullvtgd, fullvtgd/fulltgd, vtgd/fullvtgd, tgd/fulltgd, vtgd/tgd, tgd/lavtgd,lavtgd/id} {
			\draw[-] (\f) -- (\g);
		}
	\end{tikzpicture}
	\hspace{1cm}
	\begin{tikzpicture}[level distance=1.7em, sibling distance=5em, minimum height=1.75em, level 2/.style={sibling distance=3em},
		every node/.style={scale=.7, font=\large},
		edge from parent/.style={thick,-,black, draw}]
		
		\node (uc) at (0,0)  {
			\begin{tabular}{l l}
				Generic ICs ($\FO$) & $\forall \tuple{x} (\varphi(\tuple{x})\land \beta(\tuple x)\rightarrow \bigvee_{i=1}^n \exists \tuple{y}_i \psi_i(\tuple x, \tuple {y}_i))$\\\midrule \\
				$\vtgd$   & $\FO$, where $\beta$ is empty \\ 
				$\tgd$    & $\vtgd$, where $ n=1$ \\ 
				$\lavtgd$ & $\tgd$, where $\varphi$ is a single atom \\
				$\id$ & $\lavtgd$, where $\psi_1$ is a single atom \\
				\\
				
				$\UC$     & $\FO$ with no existential quantifier \\
				$\denial$ & $\UC$, where $ \psi_i$ are empty ($n=0$)\\
				$\egd$    & $\denial$, where $\beta$ is a single inequality \\
				$\fd$ &  $\denial$, where $\varphi$ uses one relation symbol
				
		\end{tabular}};
	\end{tikzpicture}
	\caption{Hierarchy of ICs and syntactic form for most commonly studied constraints~\cite{arming2016complexity}. 
	}
	\label{fig:ICs}
\end{figure}

The following example depicts ICs over the database from Example~\ref{ex:running_1}. 

\begin{example}
	\label{ex:running_2}
	Consider a DC $\mathrm{dc}$ and an LTGD $\mathrm{lav}$, as given below.
	\[
	\mathrm{dc} \coloneqq \forall x_1, x_2, x_3, x_4, x_5 \neg (E(x_1, x_2, x_3) \land D(x_2, x_4, x_5) \land x_3 \neq x_5)
	\]
	\[
	\mathrm{lav} \coloneqq \forall x_1, x_2, x_3 (D(x_1, x_2, x_3) \to \exists y_1, y_2 E(y_1, x_1, y_2))
	\]
	The denial constraint $\mathrm{dc}$ states that the database should not contain any employees that work at a different location than where their department is located and the LTGD $\mathrm{lav}$ requires that every department has at least one employee.
	Both $\mathrm{dc}$ and $\mathrm{lav}$ are violated in the database from Example~\ref{ex:running_1}.
\end{example}

Let $\calT$ be a database (a collection of facts) and $B$ be a collection of ICs.
Then $\calT$ is \emph{consistent} with respect to $B$, denoted as $\calT\models B$, if $\calT\models b$ for each $b\in B$.
Moreover, $\calT$ is \emph{inconsistent} with respect to $B$ if there is some $b\in B$  such that $\calT\not \models b$.
A \emph{subset-repair} of $\calT$ with respect to $B$ is a subset $\calP\subseteq \calT$ that is consistent with respect to $B$, and maximal in the sense that no set $\calP'$ exists such that it is consistent with respect to $B$ and $\calP \subset \calP' \subseteq T$.
In the tabular view, the subset of a DB $\calT = (T_1,\dots,T_m)$ is defined as $\calP = (P_1,\dots,P_m)$, where $P_i\subseteq T_i$ for each $i\leq m$.
In the following, we simply speak of a repair when we intend to mean a subset-repair.
Furthermore, we often consider a database $\calT$ and a set $B$ of ICs assuming that $B$ only uses relation names from the schema of $\calT$.
In that case, we call $\calD = \langle \calT, B \rangle$ a \emph{constrained database}.
For a constrained database $\calD$, $\repairs(\calD)$ denotes the set of all repairs for $\calD$.

The first problem we are interested in ($\REP$) asks whether there exists a repair 
for a constrained database $\calD$.
 {Since the empty database ($\emptyset$) satisfies each IC trivially, we are interested in the decision problem asking whether a non-empty repair exists.
}
%
\problemdef{$\REP$}{a constrained database $\calD=\langle \calT, B\rangle$}{ {is there a repair $\mathcal R\in \repairs(\calD)$ with $\mathcal R\neq \emptyset$}}

Two further problems of interest are 
\emph{brave} and \emph{cautious} reasoning for a given fact $s\in \calT$, asking whether $s$ belongs to some (every) repair for $\calD$.

\problemdef{$\somerepair$}{a constrained database $\calD=\langle \calT, B\rangle$ and a fact $s\in \calT$}{does $s$ belong to some repair for $\calD$}

\problemdef{$\allrepair$}{a constrained database $\calD=\langle \calT, B\rangle$ and a fact $s\in \calT$}{does $s$ belong to all repairs for $\calD$}

We also consider the case when the set $B$ of integrity constraints is fixed and the input only involves a database $\calT$ (and a fact $s$).
 {This gives rise to separate decision problems $\REPB$, $\somerepairB$ and $\allrepairB$, for each $B$. We can then define the of \emph{data complexity} of $\REP$, $\somerepair$ and $\allrepair$, in the usual way. That is, for a complexity class $\mathcal{C}$, the data complexity of  $\REP$ is in $\mathcal{C}$, if $\REPB$ is in $\mathcal{C}$, for every $B$, and $\mathcal{C}$-hard, if $\REPB$ is $\mathcal{C}$-hard, for some $B$.}
Thus, the problems defined above (i.e., $\REP$, $\somerepair$ and $\allrepair$) then correspond to the setting of \emph{combined complexity}.

Observe that our focus in this work lies on inconsistent databases, i.e., we assume that $\calT$ is inconsistent with respect to $B$ in a constrained database $\calD=\langle \calT, B\rangle$. 
This is not a restriction since the considered questions about repairs are trivial if the database is consistent with respect to its constraints.
Moreover, all our translations to (SET)AFs and the connection between repairs and extensions still apply to the consistent case, although they do not provide much insight. 

The following example presents a repair for the database from Example~\ref{ex:running_1} with respect to the ICs from Example~\ref{ex:running_2}. 
\begin{example}
	\label{ex:running_3}
	Reconsider the database from Example~\ref{ex:running_1}, which is inconsistent with the ICs from Example~\ref{ex:running_2}.
	Here, the fact $d_3$ violates $\mathrm{lav}$, and the only way to resolve this violation (in our subset-repair setting) is to remove $d_3$.
	Moreover, the set $\{e_3, d_2\}$ also violate $\mathrm{dc}$.
	This results in two ways to repair our database, by removing either of those facts.
	Thus, we have two subset-maximal repairs, namely $\{e_1, e_2, d_1, d_2\}$ and $\{e_1, e_2, e_3, d_1\}$.
\end{example}

\paragraph*{Abstract Argumentation}
We use Dung's argumentation framework~\cite{Dung95a} and consider only non-empty and finite sets of arguments~$A$.
An \emph{(argumentation) framework~(AF)} is a directed graph~$\calF=(A, R)$, where $A$ is a set of arguments and the relation $R \subseteq A\times A$ represents direct attacks between arguments.
If $S\subseteq A$, we say that an argument~$s \in A$ is \emph{defended by $S$ in $\calF$}, if for every $(s', s) \in R$ there exists $s'' \in S$ such that $(s'', s') \in R$.

In abstract argumentation one is interested in computing the so-called \emph{extensions}, which are subsets~$S \subseteq A$ of the arguments that have certain properties.
The set~$S$ of arguments is called \emph{conflict-free in~$\calF$} if $(S\times S) \cap R = \emptyset$.
Let $S$ be conflict-free, then $S$ is
\begin{enumerate}
	\item \emph{naive in $\calF$} if no $S' \supset S$ is \emph{conflict-free} in $\calF$;
	\item \emph{admissible in $\calF$} if every $s \in S$ is \emph{defended by $S$ in $\calF$}.
	
\end{enumerate}

Further, let 
$S$ be admissible. Then, $S$ is
\begin{enumerate}
	\setcounter{enumi}{2}
	\item \emph{preferred in~$\calF$}, if there is no $S' \supset S$ that is \emph{admissible in $\calF$};
	\item \emph{stable in~$\calF$} if every $s \in A \setminus S$ is \emph{attacked} by some $s' \in S$.
\end{enumerate}

We denote each of the mentioned semantics by abbreviations: $\conf, \naive,$ $\adm, \pref,$ and $\stab$, respectively.
For 
a semantics~$\sigma \in \{\conf, \naive, \adm, \pref,  \stab\}$, we write $\sigma(\calF)$ for the set of \emph{all extensions} of semantics~$\sigma$ in $\calF$. 
Now, we are ready to define the corresponding decision problem asking for extension existence with respect to a semantics $\sigma$.
\problemdef{$\sem\sigma $}{an argumentation framework~$\calF$}{is it true that $\sigma(\calF)\neq\emptyset$}

Finally, for an AF~$\calF{=}(A,R)$ and $a\in A$, we define the problems $\cred_{\sigma}$ and $\skep_{\sigma}$, which ask whether~$a$ is in some $\sigma$-extension of $\calF$ (``\emph{credulously} accepted'') or every $\sigma$-extension of~$\calF$  (``\emph{skeptically} accepted''), respectively. 

\problemdef{$\cred_\sigma $}{an AF~$\calF = (A, R)$ and an argument $a \in A$}{is it true that $a \in E$ for some $E \in \sigma(\calF)$}

\problemdef{$\skep_\sigma $}{an  AF~$\calF = (A, R)$ and an argument $a \in A$}{is it true that $a \in E$ for all $E \in \sigma(\calF)$}


\begin{example}\label{ex:af}
	Consider the argumentation framework $\mathcal{AF} = \{a,b,c,d\}$ with the attack relation as depicted in Figure~\ref{fig:af}.
	Then extensions for $\calF$ are also depicted in Figure~\ref{fig:af} for the mentioned semantics.
	\begin{figure}
		\centering
		\begin{minipage}{0.4\textwidth}
			\begin{tikzpicture}[scale=.8,arg/.style={circle,draw=black,fill=white,inner sep=.75mm}]
				\node[arg] (a) at (0,1.5) {a};
				\node[arg] (b) at (2,1.5)  {b};
				\node[arg] (c) at (2,0)  {c};
				\node[arg] (d) at (0,0)  {d};
				
				\foreach \f/\t in {a/b,b/a,b/c,c/d}{
					\path[-stealth',draw=black, thick] (\f) edge (\t);
				}
			\end{tikzpicture}
		\end{minipage}
		\begin{minipage}{0.45\textwidth}
			\[
			\begin{array}{l l}
				\text{Naive:} & \{a,c\}, \{a,d\}, \{b, d\} \\
				\text{Admissible:} & \{a\}, \{a,c\}, \{b\}, \{b,d\} \\
				\text{Preferred:} & \{a,c\}, \{b,d\} \\
				\text{Stable:} & \{a,c\}, \{b,d\}
			\end{array}
			\]
		\end{minipage}
		\caption{Framework $\calF$ from Example~\ref{ex:af} (Left) and  extensions for given semantics (Right).}
		\label{fig:af}
	\end{figure}
	%
\end{example}

The complexity of reasoning in argumentation is well understood, see~\cite[Table 1]{DvorakDunne17} for an overview.
In particular, $\cred_{\naive}$ and $\skep_{\naive}$ are in $\Ptime$, 
whereas, $\cred_{\pref}$  and $\skep_{\pref}$ are $\NP$-complete and $\PiP$-complete, respectively.
Moreover, the problem 
to decide whether there is a non-empty extension is in $\Ptime$ for $\naive$ and $\NP$-complete for $\pref$-semantics.
This makes $\naive$-semantics somewhat easier 
and $\pref$ the hardest among the considered semantics in this work.

\paragraph{Set-Based Argumentation.}
Dung's AFs can be generalized by allowing attacks not only from single arguments, but from collections of arguments \cite{NielsenP06}.
A \emph{B-hypergraph} (backwards hypergraph) is a directed hypergraph $(V, E)$ where each edge goes from a set of nodes to a single node, i.e., $E \subseteq 2^V \times V$.
A \emph{set-based argumentation framework (SETAF)} is a B-hypergraph $\calF = (A, R)$ where $A$ is the set of arguments and $R \subseteq 2^A \times A$ represents the attacks from sets of arguments to individual arguments. 
A subset $S \subseteq A$ of arguments is \emph{conflict-free}, if $(2^S \times S) \cap R = \emptyset$ and $S$ \emph{defends} an argument $s \in A$ if for every $(S', s) \in R$ there exists $S'' \subseteq S$ such that $(S'', s') \in R$ for some $s' \in S'$.
For $S=\{s\}$ a singleton set, we prefer denoting the attack $(\{s\},s')$ by $(s, s')$.
With this definition of conflict-freeness, the notions of \emph{naivety}, \emph{admissibility}, \emph{preferredness} and \emph{stability} are defined analogously to plain AFs.

The corresponding decision problem for the extensions existence is defined as follows.
\problemdef{$\sem\sigma^{\textit{SET}}$}{a SETAF $\calF$}{is it true that $\sigma(\calF) \neq \emptyset$}

For a SETAF~$\calF$ and argument $a\in A$, we similarly define the problems $\cred_{\sigma}$ and $\skep_{\sigma}$ asking whether~$a$ is in some $\sigma$-extension of $\calF$ or every $\sigma$-extension of~$\calF$, respectively. 
\begin{example}
	\label{ex:setaf}
	Consider the SETAF $\calS = (A, R)$, where $A = \{a, b, c, d, e\}$ and $R = \{(\{a, b\}, d), (\{b, c\}, e), (\{e\}, a)\}$.
	In $\calS$, the extension $\{a, b, c\}$ is conflict-free, preferred and stable. Note that $\{b, c\}$ defends $a$ against $\{e\}$.
	This example is visualized in Figure~\ref{fig:setaf}.
	\begin{figure}
		\begin{center}
			\begin{tikzpicture}[scale=.9,arg/.style={circle,draw=black,fill=white,inner sep=.75mm}]
				\node[arg] (a) at (1,3) {$a$};
				\node[arg] (b) at (3,3.5) {$b$};
				\node[arg] (c) at (5,3) {$c$};
				\node[arg] (e) at (2,1) {$e$};
				\node[arg] (d) at (4,1) {$d$};
				
				\draw[rounded corners=6pt, thick, rotate=10] (1, 2.4) rectangle (4, 3.34);
				\draw[rounded corners=6pt, thick, rotate=350] (1.9, 3.4) rectangle (5, 4.35);
				
				\draw[-stealth',draw=black, thick ] (2, 2.79) -> (d);
				\draw[-stealth',draw=black, thick ] (4, 2.75) -> (e);
				\draw[-stealth',draw=black, thick ] (e) -> (a);
			\end{tikzpicture}
		\end{center}
		
		\caption{The SETAF $\calS$ from Example~\ref{ex:setaf}.}
		\label{fig:setaf}
	\end{figure}
\end{example}

\section{Translating Databases to Argumentation Frameworks}

In this section, we prove that repairs under expressive families of integrity constraints (involving DCs and LTGDs) can be equivalently seen as extensions in SETAFs.
In the first two subsections, we consider instances containing only one type of ICs to SETAFs and the third subsection combines both constraints (DCs and LTGDs).
Given an instance $\calD = \langle \calT,B\rangle $ comprising a database $\calT$ and a set $B$ of ICs, the goal is to capture all the subset-repairs for $\calD$ by $\sigma$-extensions of the resulting SETAF $\mathcal S_\calD$ for some semantics $\sigma$.

In Section~\ref{sec:dcs}, we encode a constrained database $\mathcal D = \langle \calT,D \rangle $ with database $\calT$ and a collection $D$ of DCs into a SETAF $\mathcal S_\calD$.
This is achieved by letting each fact $s\in \calT$ be an argument.
Then the attack relation between arguments simulates the violation of some $d \in D$ by sets of facts. 
This allows to establish a connection between repairs for $\calD$ and the extensions for the resulting SETAF under preferred, stable and naive semantics.
This construction for DCs is a special case of the results by \cite{BienvenuB20}.
In our case, we do not consider priorities among facts in the database which allows a weaker SETAF-semantics (naive extensions) to capture repairs.

In Section~\ref{sec:tgds}, we simulate 
a constrained database $\calD= \langle \calT, L\rangle$ including a collection $L$ of LTGDs via SETAFs.
The first observation is that the semantics for LTGDs requires the notion of \emph{support} or \emph{defense} rather than \emph{conflict} between facts.
Then, we depict each fact as an argument as well as use auxiliary arguments to simulate the dependency $\ell \in L$ (i.e., to model the semantics for LTGDs).
Further, we add self-attacks for these auxiliary arguments to prohibit them from appearing in any extension.
Consequently, we establish a connection between repairs for $\calD$ and the extensions for SETAFs under preferred semantics. 
Finally, we present a polynomial-time processing of the resulting SETAFs that computes its unique preferred extension, which also yields a unique repair for $\calD$.
Interestingly, the computed preferred extension is also stable and naive, and hence coincides with the only repair for $\calD$.
Our construction for LTGDs relies on self-attacking arguments.
This allows us to avoid taking auxiliary arguments in any extension and establish a connection between repairs of a DB and extensions of the resulting SETAF.
Towards the end of our paper, we discuss (Section~\ref{sec:self-attacks}) the role played by the self-attacking arguments and a possibility for a different translation from DBs to (SET)AFs.

Having established that both DCs and LTGDs can be modeled in SETAFs via attacks, Section~\ref{sec:dcs-tgds} generalizes this approach by allowing both types of constraints. 
%
We observe that no prior work has handled the case of tuple-generating dependencies (to our knowledge).

\subsection{Simulating Denial Constraints via SETAFs}\label{sec:dcs}

We first formalize the notion of conflicts for a DB with respect to a set of denial constraints.
Intuitively, a \emph{conflict} in a constrained database $\calD$ is a minimal (under set inclusion) inconsistent set of facts from $\calT$ with respect to some denial constraint in $\calD$.

\begin{definition}
	 {Let $\calD=\langle\calT, D \rangle$ be a constrained database 
		with a set $D$ of DCs.
		A collection 
		$C\subseteq \calT$ of facts is a \emph{conflict} in $\calD$ if there exists some $d\in D$ such that (i)  
		$C\not\models d$, and 
		(ii) $C'\models d$ for each proper subset $C'\subsetneq C$.}
\end{definition}

Given $\calD$, then by $\conflicts(\calD)$ we denote the set of all conflicts in $\calD$.
It is known that $\conflicts(\calD)$ can be represented as a \emph{conflict hypergraph}~\cite{CHOMICKI200590} where the vertices are the facts in $\calD$ and hyperedges are conflicts in $\calT$.
We follow \cite{BienvenuB20} and transform an instance $\calD = \langle \calT, D\rangle$ with database $\calT$ and a collection $D$ of DCs to a SETAF $\calS_\calD$.

\begin{definition}[\cite{BienvenuB20}]\label{def:setaf-dc}
	Let $\calD = \langle \calT, D\rangle$ be a constrained database including a database $\calT$ and a collection $D$ of DCs. 
	Then, $\calS_\calD$ denotes the following SETAF.
	\begin{itemize}
		\item $A\dfn \calT$, i.e, each fact $s\in \calT$ is seen as an argument.
		\item $R\dfn \{(C\setminus\{t\}, t) \mid C\in\conflicts(\calD), t\in C\}$.
	\end{itemize}
	We call $\calS_\calD$ the SETAF generated by $\calD$.
\end{definition}

It is worth highlighting that given a set $D$ of denial constraints as input, then checking whether ``a database $\calT'$ is a repair of a database $\calT$ w.r.t. $D$'' is $\DP$-complete even if $\calT$ and $\calT'$ are both fixed~\cite[Lemma 6]{arming2016complexity}.
The complexity of the repair checking problem drops to $\Ptime$ if we consider a {fixed set} of denial constraints.
As a result, one normally considers a fixed set $D$ of DCs.
In our encoding, this has the effect that one can construct the argumentation framework from a constrained database $\calD$ in polynomial time.
Furthermore, given the conflict hypergraph $\calG_\calD$ for a constrained database $\calD$, the framework $\calS_\calD$ can still be constructed in polynomial time in the size of $\calG_\calD$.

\begin{example}
	\label{ex:setaf_dcs}
	Consider the following database tables $C$ and $O$, containing the data about customers and orders, respectively.
	\begin{center}
		\begin{tabular}{l@{\hskip 5pt}|@{\hskip5pt}c@{\;}cc@{\hskip5pt}}
			$C$ & Cus\_ID  & Location\\ \hline
			$s_1$ & C1 & Paderborn\\
			$s_2$ & C2 & Sheffield\\
			$s_3$ & C3 & Hanover\\
		\end{tabular}
		\hspace{5mm}
		\begin{tabular}{l@{\hskip 5pt}|@{\hskip5pt}c@{\;}cccc@{\hskip5pt}}
			$O$ & Ord\_ID  & Cus\_ID & Prod\_ID & Ship\_ID\\ \hline
			$t_1$ & O1 & C1 & P1 & S1\\
			$t_2$ & O2 & C2 & P2 & S2\\
			$t_3$ & O3 & C2 & P1 & S1\\
			$t_4$ & O5 & C1 & P1 & S4\\
		\end{tabular}
	\end{center}
	Here, Cus\_ID, Ord\_ID, Prod\_ID and Ship\_ID are the IDs of individual customers, orders, products and shipments, respectively.
	Let $D = \{\mathrm{dc}_1, \mathrm{dc}_2\}$ be a set of denial constraints where
	\[
	\mathrm{dc}_1 \coloneqq \forall x_1, \dots , x_6 \neg (O(x_1, x_2, x_3, x_4) \land O(x_5, x_2, x_3, x_6) \land x_1 \neq x_5)
	\]
	and
	\[
	\mathrm{dc}_2 \coloneqq \forall x_1, \dots, x_9 \neg (O(x_1, x_2, x_3, x_4) \land O(x_5, x_6, x_7, x_4) \land C(x_2, x_8) \land C(x_6, x_9) \land x_8 \neq x_9).
	\]
	The denial constraint $\mathrm{dc}_1$ essentially says that no customer should have more than one order for the same product and $\mathrm{dc}_2$ requires that no orders to different locations can be transported in the same shipment. 
	
	Both constraints are violated in the tables $C$ and $D$.
	The conflicts in the DB include $\{t_1,t_4\},\{t_1,t_3,s_1,s_2\}$.
	The respective SETAF $\calS_{\langle \{C, O\}, D \rangle}$ is depicted in Figure~\ref{fig:setaf_dcs}. 
	\begin{figure}
		\centering
		\begin{tikzpicture}[scale=1.2,arg/.style={circle,draw=black,fill=white, scale=0.8, inner sep=.75mm}]
			\node[arg] (t1) at (0,2) {$t_1$};
			\node[arg] (t3) at (0,0) {$t_3$};
			\node[arg] (t2) at (-2,2) {$t_2$};
			\node[arg] (t4) at (-2,0) {$t_4$};
			\node[arg] (s1) at (2,2) {$s_1$};
			\node[arg] (s2) at (2,0) {$s_2$};
			\node[arg] (s3) at (4,0) {$s_3$};

			\draw[rounded corners=6pt, dashed, thick,  draw=blue] 
			(-.4,2.9) -- (-.4,-.4) -- (2.9,-.4) -- cycle;
			\draw[rounded corners=6pt, dashed, thick, draw=red] 
			(2.4,2.4) -- (-.9,2.4) -- (2.4,-.9) -- cycle;
			\draw[rounded corners=6pt, thick, dashed, draw=green] 
			(2.5,3) -- (2.5,-.5) -- (-1,-.54) -- cycle;
			\draw[rounded corners=6pt, thick, dashed, draw=black] 
			(3,2.5) -- (-.5,2.5) -- (-.5,-1) -- cycle;
			
			\draw[-stealth', thick, draw=green] (.75, 1.25) -> (t1);
			\draw[-stealth', thick, draw=red] (0.75, 0.75) -> (t3);
			\draw[-stealth', thick, draw=blue] (1.25, 1.25) -> (s1);
			\draw[-stealth', thick, draw=black] (1.25, 0.75) -> (s2);
			\draw[-stealth', thick, draw=black] (t1) -> (t4);
			\draw[-stealth', thick, draw=black] (t4) -> (t1);
			
		\end{tikzpicture}
		
		\caption{SETAF $\calS_{\langle \{C, O\}, D \rangle}$ for Example~\ref{ex:setaf_dcs}. The attacker in each set-attack is depicted as a triangle of different color (for better presentation) and the attack is presented in the same color.  E.g., the red triangle and the arrow depicts the attack $(\{t_1,s_1,s_2\},t_3)$. }
		\label{fig:setaf_dcs}
	\end{figure}
\end{example}

	It is known that the repairs for $\calD$ correspond precisely to the independent sets of its conflict hypergraph $\calG_\calD$.
	Observe that the independent sets in $\calG_\calD$ are exactly the naive extensions of $\calS_\calD$.
	Moreover, using the observations from~\cite[Theorem 28]{BienvenuB20}, it is easy to see that every naive extension is also preferred as well as stable.
	Therefore, we have the following equivalence between the extensions and repairs.
	The equivalence regarding the stable and preferred semantics is a direct consequence of the results in~\cite{BienvenuB20}.
	\begin{theorem}\label{thm:ext-dcs}
		Let $\mathcal D = \langle \calT, D \rangle$ be a constrained database where $D$ is a set of DCs and let $\AF{S}{D}$ denote the SETAF generated by $\mathcal D$.
		Then for every $\calP\subseteq \calT$, $\calP \in \repairs(\calD)$ iff $\calP\in \sigma(\AF{S}{D})$ for $\sigma \in \{\naive, \stab, \pref\}$.
	\end{theorem}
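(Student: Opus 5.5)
The plan is to show three separate equivalences, with the naive case as the hub. First I would show that, for every $\calP\subseteq\calT$, $\calP$ is consistent with respect to $D$ iff $\calP$ is conflict-free in $\AF{S}{D}$. The key fact is that denial constraints are anti-monotone: if $\calP\models d$ and $\calP'\subseteq\calP$, then $\calP'\models d$, since a violation of $d$ in $\calP'$ is witnessed by an assignment of the atoms of $\varphi$ to facts of $\calP'\subseteq\calP$ (the domain of the FO-structure plays no role here, because every variable of a DC occurs in a relational atom). Consequently, whenever $\calP\not\models d$, the finitely many facts used by a violating assignment contain a minimal violating subset, that is, a conflict $C\subseteq\calP$.

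Conflict-freeness then follows in both directions. If $\calP$ is inconsistent, pick such a $C\subseteq\calP$. The case $|C|\geq 1$ always holds, since the empty set satisfies $d$. Choosing $t\in C$ yields the attack $(C\setminus\{t\},t)\in R$ with attacker and target inside $\calP$, so $\calP$ is not conflict-free. Conversely, an attack $(C\setminus\{t\},t)$ inside $\calP$ means $C\subseteq\calP$, so $\calP\not\models d$ by monotonicity of violation. Since subset-maximality is taken over the same family of sets on both sides, it follows that $\repairs(\calD)=\naive(\AF{S}{D})$.

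Next I would show that every naive extension $S$ is stable. Let $t\in\calT\setminus S$. By naivety, $S\cup\{t\}$ contains an attack. Any such attack must involve $t$, either as target or inside the attacker, and in both cases it comes from a conflict $C\subseteq S\cup\{t\}$ with $t\in C$, because $S$ alone is conflict-free. Then $(C\setminus\{t\},t)\in R$ with $C\setminus\{t\}\subseteq S$, so $S$ attacks $t$. This symmetric shape of attacks, where every member of a conflict is attacked by the rest, is the crucial feature. Conversely, every stable extension is conflict-free, and it is maximal as such, since adding any outside argument creates an attack from $S$. Hence $\stab=\naive$. Stable extensions are admissible: if $(C\setminus\{t\},t)$ attacks $t\in S$, then some $s'\in C\setminus\{t\}$ lies outside $S$ by conflict-freeness, and $S$ attacks $s'$ by stability, giving the defense. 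So $\stab\subseteq\pref$ holds once maximality among admissible sets is also settled.

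The main obstacle is the preferred case, specifically showing $\pref\subseteq\naive$ and that naive sets are maximal admissible. For the latter, any admissible proper superset of a naive set is in particular conflict-free, which contradicts naivety. For the former, let $S$ be preferred. Since $S$ is conflict-free, it extends to some naive $N\supseteq S$. By the previous paragraph $N$ is stable and hence admissible, so maximality of $S$ among admissible sets gives $S=N$. This argument needs no special handling of self-attacks: a fact that alone violates some $d$ forms a singleton conflict $\{t\}$, producing the attack $(\emptyset,t)$. Such an $\emptyset$-attack cannot be defended, and the fact correctly lies in no repair. I would double-check this corner case against the SETAF definitions, and in particular confirm that the defense condition behaves correctly when the attacker is empty.
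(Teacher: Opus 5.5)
Your proof is correct. For the naive case it follows the paper's line: a set of facts is consistent iff it contains no conflict iff it contains no attack $(C\setminus\{t\},t)$ together with its target, and maximality is taken over the same family of subsets of $\mathcal{T}$. You are more careful than the paper here, because you make explicit two facts its proof uses silently. First, violation of a DC is preserved under supersets. Second, a finite inconsistent set contains an inclusion-minimal violating subset. (Your safety assumption is not actually needed for the second fact. Violation is upward closed under active-domain semantics anyway, so you can take a minimal violating subset of $\mathcal{P}$ itself rather than of the facts used by one assignment.)

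The real difference lies in the stable and preferred cases. The paper does not prove these; it only remarks that they follow from the observations behind Theorem~28 of Bienvenu and Bourgaux, which concerns the more general prioritized setting. You instead derive them from one structural property of the generated SETAF: every member $t$ of a conflict $C$ is attacked by $C\setminus\{t\}$. You use this property exactly once, to show that a naive set attacks every outside argument. The remaining steps are general SETAF facts: ``conflict-free and attacking everything outside implies admissible'', and ``a preferred set extends to a naive, hence admissible, superset''.

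The citation route is shorter and places the result inside the prioritized framework. Your route is self-contained and explains why the three semantics coincide, so it applies verbatim to any SETAF whose attacks have this shape. Because you check admissibility separately, it also works whether stability is defined on top of admissibility (as in this paper) or on top of conflict-freeness. Your handling of singleton conflicts via the attack $(\emptyset,t)$ matches the paper's definitions: such a $t$ lies in no conflict-free set and cannot be defended, which agrees with the fact that it lies in no repair.
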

	\begin{proof}
		Given a repair $\calP\subseteq \calT$, then $\calP\models d$ for every $d\in D$ and hence there is no $C\in\conflicts(\calD)$ with $C\subseteq \calP$. 
		So, the set of arguments in $\calP$ is conflict-free since every attack in $R$ has the form $(E\setminus\{t\},t)$ where
		$E\cup \{t\} \in \conflicts(\calD)$.
		
		The converse holds similarly.
		A naive extension $P\subseteq A$ does not contain all the arguments in any attack $(E,t)$. Thus, for every conflict $C\in \conflicts(\calD)$, such that $C=E\cup\{t\}$, we have that $C\not\subseteq P$ and therefore $P$ (seen as a subset of $\calT$) is a repair for $\calD$.
		Finally, the subset-maximality remains the same for both cases since each fact is seen as an argument and vice versa. 
		This establishes the correctness of our theorem.
	\end{proof}
	
	It is worth remarking that the presence of DCs with a single database atom yields singleton conflicts which results in self-contradictory facts (or self-attacking arguments in the translation).
	However, such self-contradictory facts do not occur in any repair and can be removed using existing reasoning algorithms, as highlighted by Bienvenu and Bourgaux~\cite{BienvenuB20}.
	Therefore, a usual assumption is that every denial constraint contains at least two database atoms.
	This has the effect that each conflict contains at least two argument.
	Consequently, in the presence of DCs, every fact in a conflict belongs to at least one repair and no fact in any conflict belongs to every repair.
	This holds because for each conflict $C\in\conflicts(\calD)$, there is a repair containing facts in $C\setminus \{t\}$ as well as another one containing $t$.
	As a result, we have the following observation regarding
	the acceptability of facts with respect to $\calD$.
	Specifically, we can decide if a given fact
	$s \in \calT$ is in some (or all) repairs, in polynomial time for a fixed set of DCs.
	
	\begin{remark}\label{cor:dc-cs}
		Let $D$ be a set of DCs over the schema $\tau$. 
		Then, for every constrained database $\calD = \langle \calT,D \rangle$ involving a database $\calT$ over $\tau$: $\somerepair$ is true for every $s\in\calT$ and $\allrepair$ is true iff $s\not\in C$ for any $C\in \conflicts({\calD})$.
		Moreover, both problems are decidable in polynomial time in data complexity.
	\end{remark}

	\subsection{Simulating LTGDs via SETAFs}\label{sec:tgds}
	 {Let $\calD =\langle \calT, L\rangle $ be a constrained database with a collection $\calT$ of facts and a collection $L$ of LTGDs.
		We recall the semantics for LTGDs via variable assignments.
		For a set $V$ of variables, an \emph{assignment} of $V$ into a database $\calT$ is a mapping $h\colon V\rightarrow \dom(\calT)$.
		For a relational atom $T(\tuple t)$ with terms $\tuple t=(t_1,\dots,t_k)$, we denote by $h(T(\tuple t))$ the result of substituting every $t_i$ in $T(\tuple t)$ by $h(t_i)$, where we set $h(t_i)=t_i$ when $t_i$ is a constant.
		{In other words, an assignment maps a relational atom $T(\tuple t)$ to a candidate fact $h(T(\tuple t))$, which may or may not be present in the database $\calT$.}
		The notation extends to a conjunction of atoms by considering them as a set. 
		{Precisely, for a conjunction $\phi(\mathbf x)\dfn\bigwedge_{1\leq i\leq n} T_i(\mathbf x_i)$ with $\mathbf x_i\subseteq \mathbf x$ and $T_i\in \tau$, and an assignment $h\colon \mathbf x\rightarrow \dom(\calT)$, we define $h(\phi(\mathbf x))=\{h(T_i(\mathbf x_i))\mid 1\leq i\leq n\}$.
		}
		Recall that every LTGD $\ell\in L$ has the form $\forall \tuple x (T(\tuple x) \rightarrow \exists \tuple y \psi(\tuple x,\tuple y))$ where $\psi\dfn \bigwedge_{1\leq i\leq n_\ell}\psi_i(\tuple z_i)$ 
		is a conjunction of 
		relational atoms over variables $\tuple z_i \subseteq \tuple x\cup\tuple y$ using relation names $\psi_i\in\tau$.
		Then, $\calT\models \ell$ if for each assignment $h$ such that $h(T(\tuple x))\in \calT$, there exists an assignment $h'\colon \mathbf x\cup\mathbf y\rightarrow \dom(\calT)$ such that $h'\upharpoonright_{\tuple x}=h\upharpoonright_{\tuple x}$ and $h'(\psi(\tuple x, \tuple y)) \subseteq \calT$, where $h\upharpoonright_{\tuple x}$ is the restriction of $h$ on $\tuple x$.
		If $h(T(\tuple x))\in \calT$ is true and there indeed exists such an assignment $h'$ 
		then we say that the set of facts $h'(\psi(\tuple x, \tuple y))$ \emph{supports} the fact $h(T(\tuple x))$ for the LTGD $\ell$.
		For such an LTGD $\ell$, we call $T$ the \emph{source} of $\ell$, denoted as $\source(\ell)$. For a $\source(\ell)$-fact $s$, we denote by $\support{\ell}{s}$ the collection of all the sets supporting $s$ for the LTGD $\ell$.}
	It is worth remarking that, for a constrained database $\calD=\langle\calT,L\rangle$ and $\ell\in L$, $\calD\models \ell$ if and only if $\support{\ell}{s}\neq\emptyset$ for each $\source(\ell)$-fact $s$.
	
	Interestingly, we can encode the sets supporting a fact by using SETAFs, as formalized next.
	
	\begin{definition}\label{def:setaf-tgd}
		Let $\calD =\langle \calT,L \rangle $ be a constrained database including a database $\calT$ and a collection $L$ of LTGDs.
		Then $\AF{S}{D}$ is the following SETAF.
		\begin{itemize}
			\item $A \dfn  \{s \mid s\in \calT \} \cup \{s_\ell \mid s\text{ is a } \source(\ell) \text{-fact for } \ell\in L\}$,
			\item $R\dfn \{(s_\ell,s), (s_\ell,s_\ell) \mid s\text{ is a } \source(\ell) \text{-fact for } \ell\in L \}\cup \{(S,s_\ell)\mid S\in \support{\ell}{s}, \ell\in L\}$.
		\end{itemize}
	\end{definition}
	 {Intuitively, for each $\ell\in L$ and {a $\source(\ell)$-fact $s$}, 
		the attack $(s_\ell,s)$ models that the LTGD $\ell\in L$ must be satisfied for the fact $s$.
		Moreover, the self-attacks $(s_\ell,s_\ell)$ enforce that the extensions only contain arguments corresponding to facts in the database.
		Then, each $S\in \support{\ell}{s}$ attacks $s_\ell$ and consequently, defends $s$ against $s_\ell$.
		The whole idea captured in this translation is that for each $\ell \in  L$ 
		of the form $\forall \tuple x (T(\tuple x) \rightarrow \exists \tuple y \bigwedge_{1\leq i\leq n_\ell}\psi_i(\tuple z_i))$ where $\tuple z_i \subseteq \tuple x\cup\tuple y$: 
		a $T$-fact $s$ is in a repair $\mathcal P$ of $\calD$ if and only if there are $\psi_i$-facts $t_i $ for $1\leq i\leq n_\ell$, such that $\{t_1,\dots,t_{n_\ell}\}\in \support{\ell}{s}$ if and only if the set $S=\{t_1,\dots,t_{n_\ell}\}$ attacks $s_\ell$ in $\calS_\calD$ if and only if the argument $s\in A$ is defended against $s_\ell$ in $\AF{S}{D}$.}

	\begin{example}
		\label{ex:setaf_lavtgds}
		The database tables $E$, $D$ and $P$ contain data about employees, departments and projects, respectively, and are defined as follows:
		\begin{center}
			\begin{tabular}{l@{\hskip 5pt}|@{\hskip5pt}c@{\;}cc@{\hskip5pt}}
				$E$ & Emp\_ID  & Dept\_ID\\ \hline
				$s_1$ & E1 & D1\\
				$s_2$ & E2 & D2\\
				$s_3$ & E3 & D3\\
			\end{tabular}
			\hspace{5mm}
			\begin{tabular}{l@{\hskip 5pt}|@{\hskip5pt}c@{\;}cc@{\hskip5pt}}
				$D$ & Dept\_ID  & Dept\_Name\\ \hline
				$t_1$ & D1 & Accounting\\
				$t_2$ & D3 & Sales\\
			\end{tabular}
			\hspace{5mm}
			\begin{tabular}{l@{\hskip 5pt}|@{\hskip5pt}c@{\;}cc@{\hskip5pt}}
				$P$ & Prod\_ID  & Dept\_ID\\ \hline
				$u_1$ & P1 & D1\\
				$u_2$ & P2 & D3\\
			\end{tabular}
		\end{center}
		Here, Emp\_ID, Dept\_ID and Prod\_ID are the IDs of individual employees, departments and products, respectively.
		Let $L = \{\mathrm{lav}_1, \mathrm{lav}_2\}$ be a set of LTGDs where
		\[
		\mathrm{lav}_1 \coloneqq \forall x_1, x_2 (E(x_1, x_2) \to \exists y, z (D(x_2, y) \land P(z, x_2)))
		\]
		and
		\[
		\mathrm{lav}_2 \coloneqq \forall x_1, x_2 (D(x_1, x_2) \to \exists y, z (E(y, x_1) \land P(z, x_1))).
		\]
		The constraint $\mathrm{lav}_1$ essentially says that if an employee is in a department, then the department must exist and be associated with a project.
		The constraint $\mathrm{lav}_2$ says that each department has at least one employee and at least one project associated with it.  
		
		In the tables $E$, $D$ and $P$, $\mathrm{lav}_1$ is violated while $\mathrm{lav}_2$ is satisfied.
		The source of each TGD $\ell$, and the supporting facts  for $\source(\ell)$-fact $s$ are presented as follows:
		\begin{itemize}
			\item $\source(\mathrm{lav}_1) = E,$ 
			\item $ \source(\mathrm{lav}_2) = D,$ 
			\item $\{t_1,u_1\}$ supports $s_1$ for $\mathrm{lav}_1$,
			\item $s_2$ does not have any supporter for $\mathrm{lav}_1$
			\item $\{t_2,u_2\}$ supports $s_3$ for $\mathrm{lav}_1$,
			\item $\{s_1, u_1\}$ supports $t_1$ for $\mathrm{lav}_2$,
			\item $\{s_3, u_2\}$ supports $t_2$ for $\mathrm{lav}_2$.
		\end{itemize} 
		The respective SETAF $\calS_{\langle \{E, D, P\}, L \rangle}$ is depicted in Figure~\ref{fig:setaf_lavtgds}.
		\begin{figure}
			\centering
			\begin{tikzpicture}[scale=.8,arg/.style={circle,draw=black,fill=white, scale=0.8, inner sep=.75mm}]
				\node[arg] (s1) at (0,2) {$s_1$};
				\node[arg] (s2) at (0,0) {$s_2$};
				\node[arg] (s3) at (0,-2) {$s_3$};
				
				\node[arg] (s1l) at (-2,2) {$s_{11}$};
				\node[arg] (s2l) at (-2,0) {$s_{12}$};
				\node[arg] (s3l) at (-2,-2) {$s_{13}$};
				\node[arg] (u1) at (2,2) {$u_1$};
				\node[arg] (u2) at (2,-2) {$u_2$};
				\node[arg] (t1) at (4,2) {$t_1$};
				\node[arg] (t2) at (4,-2) {$t_2$};
				
				\node[arg] (t1l) at (6,2) {$t_{21}$};
				\node[arg] (t2l) at (6,-2) {$t_{22}$};

				\draw[rounded corners=6pt, thick, green] (-.4, 2.4) rectangle (2.4, 1.6);
				\draw[rounded corners=6pt, thick, green] (-.4, -1.6) rectangle (2.4, -2.4);
				
				\draw[rounded corners=6pt, thick, blue] (1.6, 2.4) rectangle (4.4, 1.6);
				\draw[rounded corners=6pt, thick , blue] (1.6, -1.6) rectangle (4.4, -2.4);

				\foreach \f/\t in {s1l/s1l,s2l/s2l,s3l/s3l}{
					\path[-stealth',draw=red] (\f) edge[loop left] (\t);
				}
				\foreach \f/\t in {t1l/t1l,t2l/t2l}{
					\path[-stealth',draw=red] (\f) edge[loop right] (\t);
				}
				\foreach \f/\t in {s1l/s1,s2l/s2,s3l/s3,t1l/t1,t2l/t2}{
					\path[-stealth', thick, draw=black] (\f) edge (\t);
				}
				
				\path[->,draw=green, thick] (1,2.5) edge[bend left=60] (t1l);
				\path[->,draw=green, thick] (1,-1.5) edge[bend left=60] (t2l);
				
				\path[->,draw=blue, thick] (3,2.5) edge[bend right=60] (s1l);
				\path[->,draw=blue, thick] (3,-1.5) edge[bend right=60] (s3l);
				
				
			\end{tikzpicture}
			\caption{SETAF $\calS_{\langle \{E, D, P\}, L \rangle}$ for Example~\ref{ex:setaf_lavtgds}. For brevity, we rename the LTGDs $\{\mathrm{lav}_1,\mathrm{lav}_2\}$ to be $\{1,2\}$. 
				Moreover, the auxiliary arguments for $\source(1)$-facts $s_i$ are renamed to $s_{1i}$ and those for $\source(2)$-facts $t_j$ to $t_{2j}$.}
			\label{fig:setaf_lavtgds}
		\end{figure}
	\end{example}

		Similar to the case of DCs, the repair checking problem under LTGDs is $\DP$-complete if the set $L$ of ICs is considered as input~\cite[Table~3]{arming2016complexity}.
		Therefore, we consider a fixed set $L$ of LTGDs as the complexity drops to $\Ptime$ in this case.
		Then, for a constrained database $\calD=\langle\calT,L \rangle$, the framework $\calS_\calD$ can be constructed in polynomial time.
		An algorithm basically goes through every $\ell\in L$ in turn: for each $\source(\ell)$-fact $s$, it computes the set $\support{\ell}{s}$ in polynomial time since the size of each LTGD in $L$ is constant. 
		
		 {For a set $L$ of LTGDs, a minimal violation of $\ell\in L$ is caused by a single $\source(\ell)$-fact $s$ for which $\support{\ell}{s}=\emptyset$. 
			Since we only allow subset repairs (deleting facts), the only way to eliminate a violation is to remove such a fact $s$.
			However, this removal of facts may trigger other violations for facts, all of whose supporting sets contained some deleted facts.
			This triggers a chain reaction and the repair process iteratively removes any fact that has no remaining supporting set. Since the database is finite, this process eventually terminates (in the worst case there are no facts remaining and $\emptyset$ is the only repair). 
			The remaining facts are exactly those that have at least one full support left, so no violations remain.
			The key observation is that this process always leads to the same final set of facts, no matter in which order we remove unsupported facts. 
			Therefore, the final set of facts is consistent and subset-maximal.
			Observe that in some cases $\emptyset$ might be the only repair for a database $\calD$ involving LTGDs.
		}
		\begin{remark}\label{rem:unique}
			A constrained database $\mathcal D = \langle \calT, L \rangle$ involving a set $L$ of LTGDs admits a unique repair.
		\end{remark}
		 {Now we prove that the repair of an instance $\calD$ actually corresponds to the unique preferred extension of the resulting SETAF $\AF{S}{D}$.
		}	
		
		\begin{theorem}\label{thm:ext-tgds}
			Let $\mathcal D = \langle \calT, L \rangle$ be a constrained database where $L$ is a set of LTGDs and let $\AF{S}{D}$ denote the SETAF generated by $\mathcal D$.
			Then for $\calP\subseteq \calT$, $\calP \in \repairs(\calD)$  iff $\calP\in \pref(\AF{S}{D})$.
			Moreover, $\AF{S}{D}$ has exactly one preferred extension.
		\end{theorem}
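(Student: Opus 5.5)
The proof proceeds in three steps: (a) the admissible sets of $\AF{S}{D}$ are exactly the consistent subsets of $\calT$; (b) Remark~\ref{rem:unique} gives a unique repair; (c) preferred extensions are therefore the subset-maximal consistent subsets, i.e.\ the repairs, and there is only one.

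\textbf{Step (a).} First, no admissible (indeed no conflict-free) set contains an auxiliary argument $s_\ell$, because of the self-attack $(s_\ell,s_\ell)$. So every admissible $E$ satisfies $E\subseteq\calT$.

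Second, such an $E\subseteq\calT$ is automatically conflict-free. Every attack in $R$ targets either some $s$ from an attacker $\{s_\ell\}$, or some $s_\ell$. In both cases the target or the attacker is an auxiliary argument, so no attack has all its members in $E$ together with its target in $E$.

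Third, the only attacks on a fact $s\in E$ are the $(s_\ell,s)$ with $\ell\in L$ and $s$ a $\source(\ell)$-fact. Hence $E$ defends $s$ iff for each such $\ell$ there is $S''\subseteq E$ with $(S'',s_\ell)\in R$. By Definition~\ref{def:setaf-tgd}, the attackers of $s_\ell$ are $\{s_\ell\}$ itself and the sets $S\in\support{\ell}{s}$ (supports computed in $\calT$). Since $s_\ell\notin E$, the defence must come from some $S\in\support{\ell}{s}$ with $S\subseteq E$. Because every fact in a support is drawn from $\calT$, $S\subseteq E$ means exactly that $s$ has a support inside $E$. So $E$ is admissible iff every $\source(\ell)$-fact in $E$ has an $\ell$-support within $E$, for every $\ell\in L$. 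By the remark preceding Definition~\ref{def:setaf-tgd} (applied to the database $E$), this holds iff $E\models L$.

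\textbf{Steps (b) and (c).} Since admissibility coincides with consistency on subsets of $\calT$, and both orders are $\subseteq$, the preferred extensions (maximal admissible sets) are exactly the maximal consistent subsets of $\calT$, i.e.\ $\repairs(\calD)$. By Remark~\ref{rem:unique} there is exactly one repair, so $\AF{S}{D}$ has exactly one preferred extension.

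The main obstacle is that Remark~\ref{rem:unique} is only sketched, so I would justify uniqueness directly. Let $E_1,E_2\subseteq\calT$ both be consistent. Each $s\in E_1\cup E_2$ has, for each relevant $\ell$, a support inside $E_1$ or inside $E_2$, hence inside $E_1\cup E_2$. So consistent subsets are closed under union (equivalently, admissible sets of $\AF{S}{D}$ are closed under union). The union of all consistent subsets is then the unique maximum consistent subset, which is both the unique repair and the unique preferred extension. The empty set is consistent, so this maximum always exists, possibly as $\emptyset$.
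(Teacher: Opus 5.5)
Your proof is correct. For the repairs/preferred correspondence you use the same ingredients as the paper:
\begin{itemize}
\item self-attacks keep auxiliary arguments out of any extension;
\item every attack involves an auxiliary argument, so any subset of $\calT$ is conflict-free;
\item defending $s$ against $s_\ell$ means containing some $S\in\support{\ell}{s}$.
\end{itemize}
You organise these into one lemma, ``the admissible sets are exactly the consistent subsets of $\calT$'', from which both directions and the maximality comparison follow at once. The paper instead argues the two directions separately, and each maximality step tacitly uses the other direction: an admissible superset of a repair is consistent, and a consistent superset of a preferred extension is admissible. Your lemma makes this explicit.

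The genuine difference is uniqueness. The paper simply invokes Remark~\ref{rem:unique}, whose justification is the informal claim that iteratively deleting unsupported facts gives an order-independent result. You instead prove directly that consistent subsets are closed under union. This is exactly where the single-atom body of an LTGD matters, since each violation is witnessed by one fact. Hence the union of all consistent subsets is the greatest one, and therefore the unique repair and unique preferred extension.

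Your route is shorter and fully rigorous, but non-constructive. The paper's deletion view has the advantage of directly giving the polynomial-time pre-processing that computes this extension. The two viewpoints agree, because every consistent subset survives each deletion step, so the procedure ends exactly at the greatest consistent subset.
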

		
		\begin{proof}
			We first prove the reverse direction.
			Let $P\subseteq A$ be a preferred extension in $\AF{S}{D}$, then $P$ must not contain an auxiliary argument $s_\ell$ corresponding to any LTGD $\ell\in L$ since $P$ is conflict-free.
			This implies that $P\subseteq \calT$, which together with the fact that $P$ is admissible (hence every $s\in P$ is defended against each $s_\ell\in A$) and maximal under set inclusion yields the proof of the claim.
			
			Conversely, let $\calP\subseteq \calT$ denote a repair for $\calD = \langle \calT, L \rangle$. 
			Then 
			$\calP$ is conflict-free in $\AF{S}{D}$ since each attack in $R$ contains at least one argument among the auxiliary arguments ($s_\ell$) which are not in $\calP$ (as $\calP\subseteq \calT$).
			Moreover, for each $s\in \calP$ and $\ell\in L$, there are $t_1,\dots,t_{n_\ell}\in \calP$, such that $\{s, t_1,\dots, t_{n_\ell}\}\models \ell$.
			This implies that each $s\in \calP$ is defended against the attack $s_\ell\in A$ since $(\{t_1,\dots,t_{n_\ell}\}, s_\ell)\in R$. 
			Consequently, $\calP$ is admissible. 
			To prove that $\calP$ is also preferred in $\AF{S}{D}$, assume to the contrary that there is an admissible $\calP'\supset \calP$ in $\AF{S}{D}$.
			Since $\calP'$ is also conflict-free, using the same argument as for $\calP$ we notice that $\calP'\subseteq \calT$.
			Now, $\calP'$ being admissible (together with the claim in reverse direction)
			implies that $\calP'$ is a repair for $\calD$ contradicting the fact that $\calP$ is a subset-maximal repair for $\calD$.
			As a consequence, $\calP$ is preferred in $\AF{S}{D}$.
			
			 {This proves the correctness of the theorem. The uniqueness of the extension follows due to Remark~\ref{rem:unique}.}
		\end{proof}
		
		Notice that a framework $\AF{S}{D}$ may not have stable extensions for certain constrained databases $\calD$ including databases $\calT$ and LTGDs $L$.
		This holds because some arguments can neither be accepted in an extension (e.g., when $\support{\ell}{s}=\emptyset$ for some $ \source(\ell)$-fact $s$ and $\ell\in L$), nor attacked by arguments in an extension (since arguments in $A$ only attack auxiliary arguments). 
		%
		Mahmood et al.~\cite{mahmood2024computing} presented a pre-processing over the argumentation framework generated by a constrained database $\calD$ involving inclusion dependencies, which computes its unique preferred extension which is also naive and stable.
		It is easy to observe that a similar pre-processing can be implemented for the case of LTGDs.
		This pre-processing also allows us to answer credulous and skeptical acceptance for a fact $s\in\calT$ in polynomial time considering the data complexity.
		
		\paragraph{A pre-processing algorithm for $\AF{S}{D}$.}
		Observe that an undefended argument in a SETAF $\calS=(A,R)$ cannot belong to any preferred extension of $\calS$.
		The intuition behind pre-processing is to remove such arguments, which are not defended against some of their attacks in $\AF{S}{D}$.
		This corresponds to (recursively) removing those facts $s\in \calT$, for which there is $\ell\in L$ such that $s$ is a $\source(\ell)$-fact and $\support{\ell}{s}=\emptyset$.
		The pre-processing (denoted $\pre(\AF{S}{D})$) applies the following procedure as long as possible. 
		\begin{description}
			\item[*] For each $s_\ell\in A$ such that $s_\ell$ is not attacked by any $S\subseteq A\setminus \{s_\ell\}$: remove $s$, $s_j$ for each $j\in L$, as well as each attack to and from $s$ and $s_j$.
		\end{description}
		We repeat this procedure until convergence. 
		Once a fixed point has been reached, the remaining arguments in $A$ are all defended. 
		Interestingly, after the pre-processing, removing the arguments with self-attacks results in a unique naive extension which is also stable and preferred.
		In fact, this naive extension corresponds to the unique repair of the instance $\calD$ (see Remark~\ref{rem:unique}).
		In the following, we also denote by $\pre(\AF{S}{D})$ the SETAF obtained after applying the pre-processing on $\AF{S}{D}$.
		Notice that $\pre$ is basically an adaptation to the SETAFs of the previously presented procedure for AFs and originates from the well-known algorithm for finding a maximal satisfying subteam for inclusion logic formulas~\cite[Lem.~12]{HannulaH22}.
		\begin{lemma}\label{lem:pre-setafs}
			Let $L$ be a fixed set of LTGDs, $\calD=\langle\calT, L\rangle$ be a constrained database, and $\AF{S}{D}$ denote the SETAF generated by $\calD$.
			Then $\pre(\AF{S}{D})$ can be computed from $\AF{S}{D}$ in polynomial time.
			Moreover, $\pre(\AF{S}{D})$ has a unique naive extension which is also stable and preferred.
		\end{lemma}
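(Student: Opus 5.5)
The proof has two parts: a polynomial bound for the pre-processing, then an explicit analysis of the structure of $\pre(\AF{S}{D})$.

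\textbf{Polynomial time.} Since $L$ is fixed, $\AF{S}{D}$ has at most $|\calT|(1+|L|)$ arguments. The number of attacks is polynomial: each $\support{\ell}{s}$ contains at most $|\calT|^{k}$ sets, where $k$ is the maximal number of existential atoms in $L$. Each round of the rule $*$ either removes at least one fact $s$ (together with its auxiliary arguments) or stops. Hence there are at most $|\calT|$ rounds. In each round we test every remaining $s_\ell$ for an incoming attack from a set not containing $s_\ell$, which is a polynomial scan of $R$.

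\textbf{Structure of the fixed point.} Let $\calP$ be the set of fact-arguments remaining in $\pre(\AF{S}{D})$. First I show an invariant: when an attack $(S,s_\ell)$ survives, the whole set $S$ survives. Removing a fact $t$ deletes every attack to and from $t$, so any support set containing $t$ disappears. Therefore, at the fixed point, every surviving $s_\ell$ has an incoming attack $(S,s_\ell)$ with $S\subseteq\calP$, since auxiliary arguments never occur in support sets. In database terms, every source fact in $\calP$ has a support inside $\calP$ for each $\ell$, so $\calP\models L$.

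Conversely, I show by induction on the rounds that every removed fact lies outside the unique repair $\calR$ of Remark~\ref{rem:unique}. A fact removed in a given round has, for some $\ell$, no support among the facts not yet removed; by the induction hypothesis it therefore has no support inside $\calR$. Hence $\calR\subseteq\calP$. Together with the consistency of $\calP$ and the maximality of $\calR$, this gives $\calP=\calR$.

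\textbf{Extensions.} The arguments of $\pre(\AF{S}{D})$ are exactly $\calP$ together with the self-attacking auxiliary arguments $s_\ell$ for $s\in\calP$. Every attack has an auxiliary argument as its target, or is a singleton attack from an auxiliary argument. So $\calP$ is conflict-free, and any set containing an auxiliary argument is not conflict-free. Hence $\calP$ is the unique naive extension.

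$\calP$ is also stable: every argument outside $\calP$ is some $s_\ell$, and $s_\ell$ is attacked by a set $S\subseteq\calP$ as shown above. $\calP$ is admissible: each attack on $s\in\calP$ comes from $s_\ell$, which is counter-attacked by such an $S$. Admissible sets are conflict-free, so every admissible set is contained in $\calP$, and $\calP$ is therefore the unique preferred extension.

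The main obstacle is the invariant that surviving attacks on surviving $s_\ell$ come from surviving sets $S$. This needs care about what ``remove each attack to and from $s$'' means for set-attacks that merely contain $s$. I will read it as deleting any attack whose attacking set contains $s$, as the SETAF semantics requires.
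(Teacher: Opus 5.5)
Your proof is correct and takes essentially the same route as the paper. The polynomial bound comes from the fixed-point (queue) removal. The surviving fact-arguments form a conflict-free set while every remaining auxiliary argument is self-attacking, which gives the unique naive extension. Each remaining $s_\ell$ is attacked by a surviving support set, which gives stability and admissibility. You are more explicit than the paper on three points: you justify that surviving attacks come from surviving support sets, you derive preferredness directly rather than appealing to Theorem~\ref{thm:ext-tgds}, and your induction shows the remaining facts equal the unique repair, which the paper only argues informally in Remark~\ref{rem:unique}.
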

		
		\begin{proof}
			The procedure $\pre(\AF{S}{D})$ removes recursively all the arguments corresponding to facts $s$ such that $\support{\ell}{s}=\emptyset$ for some $\ell\in L$ and $s$ is a $\source(\ell)$-fact.
			Notice that $\support{\ell}{s}$ can be computed for each $\ell\in L$ and $ \source(\ell)$-fact $s$ in polynomial time for a fixed set $L$ of LTGDs.
			Then, $\pre$ stores in a data structure (such as a queue) all the arguments $s$ for which $\support{\ell}{s}=\emptyset$.
			Finally, each argument $s$ in this queue can be processed turn by turn, adding possibly new arguments when $\pre$ triggers the removal of certain arguments from $A$ and hence from $\support{\ell}{t}$ for some $t\in A$.
			A fixed-point is reached when every element in the queue has been processed, this gives the size of $A$ as the total number of iterations.
			Consequently, $\pre$ runs in polynomial time in the size of $\AF{S}{D}$.
			
			Let $\pre(\AF{S}{D}) = (A',R')$ denote the SETAF generated by the pre-processing.
			To prove the equivalence between extensions, notice that 
			the set of arguments $S$ without self-attacks in $A'$ forms a naive extension since $S$ is conflict-free and every argument in $A'\setminus S$ contains self-attacks. 
			Moreover, $S$  is also admissible since $\support{\ell}{s}\neq \emptyset $ for every argument $s\in S$ corresponding to a  $\source(\ell)$-fact $s$.
			Furthermore, since $A'\setminus S$ only includes auxiliary arguments, those are all attacked by $S$ and therefore $S$ is stable.
			Finally, $S$ is the only naive extension in the reduced AF since $S$ is the maximal conflict-free in $\pre(\AF{S}{D})$ and arguments in  $A'\setminus S$ contain self-attacks.
			
			This establishes the correctness of the lemma together with Theorem~\ref{thm:ext-tgds}.
		\end{proof}
		
		As a consequence of Lemma~\ref{lem:pre-setafs}, 
		we can also determine $\somerepair$ and $\allrepair$ for each $s\in \calT$, once the pre-processing has terminated resulting in $\pre(\AF{S}{D})$.
		
		\begin{remark}\label{cor:ltgd-cs}
			Let $\calD=\langle\calT,L\rangle$ be a constrained database involving a set $L$ of LTGDs and $\AF{S}{D}$ denote the SETAF generated by $\calD$.
			Then, $\somerepair$ and $\allrepair$ is true for every $s\in \calT$ iff $s\in \pre(\AF{S}{D})$.
			As a consequence, both problems are decidable in polynomial time in data complexity.
		\end{remark}
		
		
		\subsection{Simulating DCs and LTGDs via SETAFs}\label{sec:dcs-tgds}

		Consider an instance $\calD= \langle \calT,B \rangle$ with a database $\calT$ and a collection $B= D\cup L$ of DCs ($D$) and LTGDs ($L$).
		We first understand the role of applying pre-processing in the presence of both ICs.
		Removing those facts from $\calT$ failing some LTGD reduces the number of arguments in the resulting SETAF without affecting the connection between extensions and repairs.
		The removed facts are limited to those that cannot belong to any repair since we are in the setting of subset-repairs.
		However, one still cannot directly answer $\somerepair$ or $\allrepair$ (as done in Remark~\ref{cor:ltgd-cs}) for all facts $s\in\calT$ even after applying this pre-processing.
		Moreover, in the presence of both types of ICs, one may want to see the effect of the failure of LTGDs and DCs separately for certain facts. 
		In other words, one can distinguish a fact violating an LTGD since it does not have a supporting fact in $\calT$ from another fact violating an LTGD since all of its supporting facts also violate some DCs.
		In this case, the pre-processing allows to distinguish these facts even if they are not removed from the resulting SETAF.
		
		In the following, we do not apply pre-processing and instead consider all the arguments for facts in the database.
		The framework $\AF{S}{D}\dfn (A,R_{\text D}\cup R_{\text L})$ generated by $\calD$ is specified as below.

		\begin{definition}\label{def:setaf-dc-tgd}
			Let $\calD = \langle \calT, B\rangle$ be a constrained database including a database $\calT$ and a collection $B=D\cup L$ of DCs $D$ and LTGDs $L$. 
			Then, we construct the SETAF $\calS_\calD = (A,R_D\cup R_L)$ as follows.
			\begin{itemize}
				\item $A \dfn  \{s \mid s\in \calT \} \cup \{s_\ell \mid s\text{ is a } \source(\ell) \text{-fact for } \ell\in L\}$,
				\item $R_D\dfn \{(C\setminus\{t\}, t) \mid C\in\conflicts(\calD), t\in C\}$,
				\item $R_L\dfn \{(s_\ell,s), (s_\ell,s_\ell) \mid s\text{ is a } \source(\ell) \text{-fact for } \ell\in L \}\cup \{(S,s_\ell)\mid S\in \support{\ell}{s} \text{ for } \ell\in L\}$.
			\end{itemize}
			As before, we call $\calS_\calD$ the SETAF generated by $\calD$.
		\end{definition}

		
		 {A consequence of allowing both types of ICs (DCs and LTGDs) is that the preferred and naive extensions do not coincide in general. Moreover, both $\somerepair$ and $\allrepair$ are non-trivial and distinct (cf. Remarks~\ref{cor:dc-cs} and \ref{cor:ltgd-cs}).
			Furthermore, in the presence of both types of ICs, the repairs correspond to somewhat costly (that is, preferred) SETAF-semantics.}
		
		\begin{theorem}\label{thm:ext-both-setafs}
			Let $\mathcal D = \langle \calT, B \rangle$ be a constrained database where $B$ includes DCs and LTGDs.
			Further, let $\AF{S}{D}$ denote the SETAF generated by $\mathcal D$.
			Then for every subset $\calP\subseteq \calT$, $\calP \in \repairs(\mathcal D)$ iff $\calP \in \pref(\AF{S}{D})$.
		\end{theorem}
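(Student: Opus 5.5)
The plan is to follow the proof of Theorem~\ref{thm:ext-tgds}, adding the DC part handled as in Theorem~\ref{thm:ext-dcs}. The central step is a characterization lemma: for a set $P\subseteq A$, $P$ is admissible in $\AF{S}{D}$ iff $P\subseteq\calT$ and $P$, seen as a set of facts, is consistent with respect to $B=D\cup L$. Once this holds, preferred extensions (maximal admissible sets) coincide with repairs (maximal consistent subsets of $\calT$), because the admissible sets and the consistent subsets are literally the same family of sets, ordered by the same inclusion.

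For the direction from admissible to consistent, I argue as follows. Let $P$ be admissible.
\begin{enumerate}
\item Each auxiliary argument $s_\ell$ attacks itself, so conflict-freeness gives $P\subseteq\calT$.
\item Conflict-freeness with respect to $R_D$ shows that $P$ contains no conflict $C$. Each $C\in\conflicts(\calD)$ contributes an attack $(C\setminus\{t\},t)$, and these lie inside $P$ whenever $C\subseteq P$. I assume, as in the paper, that conflicts have at least two facts; self-contradictory singleton facts are excluded or never appear anyway. Since every violation of a DC contains a minimal violating subset, $P\models D$.
\item For $\ell\in L$ and a $\source(\ell)$-fact $s\in P$, the attack $(s_\ell,s)$ must be countered. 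The attackers of $s_\ell$ are $s_\ell$ itself and the sets $S\in\support{\ell}{s}$.
\end{enumerate}
Step 3 is the main obstacle. The SETAF defense definition requires some $S''\subseteq P$ attacking a member of the attacking set $\{s_\ell\}$, and the attack $(s_\ell,s_\ell)$ has attacker $\{s_\ell\}\not\subseteq P$. So the defender must be some $S\in\support{\ell}{s}$ with $S\subseteq P$. This yields a support for $s$ inside $P$, hence $P\models\ell$.

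Note that admissibility does \emph{not} require defending $P$ against DC attacks $(E,t)$ with $E\not\subseteq P$ in a nontrivial way. This is where care is needed in the converse direction.

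For the direction from consistent to admissible, let $\calP\subseteq\calT$ be consistent. Conflict-freeness follows from two observations: attacks in $R_L$ all involve auxiliary arguments, and attacks in $R_D$ lie inside a conflict $C\not\subseteq\calP$. It remains to show defense.
\begin{itemize}
\item Attacks $(s_\ell,s)$ are countered by a support $S\subseteq\calP$, which exists since $\calP\models\ell$.
\item For a DC attack $(E,t)$ with $t\in\calP$, $E=C\setminus\{t\}$, I need some $e\in E$ and a subset of $\calP$ attacking $e$. Here the obvious attempt is to use that $E\not\subseteq\calP$: since $C\not\subseteq\calP$ and $t\in \calP$, there is $e\in E\setminus\calP$. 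But $e$ need not be attacked by $\calP$ in general, so plain admissibility of an arbitrary consistent set may fail.
\end{itemize}

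Therefore I would prove the lemma for \emph{maximal} consistent sets and argue maximality separately. Let $\calP$ be a repair and pick $e\in E\setminus\calP$. By maximality $\calP\cup\{e\}$ is inconsistent. There are two cases.
\begin{itemize}
\item It violates a DC. Then some conflict $C'\ni e$ has $C'\setminus\{e\}\subseteq\calP$, giving the attack $(C'\setminus\{e\},e)$ from $\calP$, as needed.
\item It violates only LTGDs. If $e$ is a $\source(\ell)$-fact lacking support in $\calP\cup\{e\}$, then $s_\ell$ attacks $e$, but $\calP$ cannot attack $e$ this way. This is the genuinely delicate case. I would try to handle it by choosing $e$ cleverly among $E\setminus\calP$, or else by reorganizing the argument. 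The alternative is to show directly that every preferred extension is a maximal consistent set (done above), and that every repair $\calP$ is contained in some admissible set which, by the first direction, is consistent and hence equals $\calP$.
\end{itemize}
For the latter I would attempt to verify that the DC defense works for repairs in this case, and if it does not, I would check the definition of $R_D$ against the pre-processed framework to locate the needed defender. I expect this DC-defense argument for repairs to be the crux of the proof.

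Finally, maximality transfers in both directions exactly as in Theorem~\ref{thm:ext-tgds}, because admissible sets avoid auxiliary arguments and are thus compared purely as subsets of $\calT$.
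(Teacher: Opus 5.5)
You have the ``admissible $\Rightarrow$ consistent subset of $\calT$'' direction right, and you located the crux correctly. The case you leave open cannot be closed, though: it yields a genuine counterexample to the statement, so neither your main route nor your fallback can be completed. Note also that the fallback's claim ``every preferred extension is a maximal consistent set (done above)'' was not done; you only showed consistency.

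Take $\calT=\{t,e,f\}$ with $t=T(a)$, $e=E(a)$, $f=F(a)$. Use the DC $\forall x\,\neg(T(x)\land E(x)\land F(x))$ and the LTGDs $\ell_1=\forall x\,(E(x)\to\exists y\,Q(x,y))$ and $\ell_2=\forall x\,(F(x)\to\exists y\,Q(x,y))$, with no $Q$-facts.
\begin{itemize}
\item The only conflict is $\{t,e,f\}$, and $e,f$ have no support, so the unique repair is $\{t\}$.
\item In $\AF{S}{D}$, $t$ is attacked by $\{e,f\}$. The only attackers of $e$ are $\{t,f\}$ and $\{e_{\ell_1}\}$, and those of $f$ are $\{t,e\}$ and $\{f_{\ell_2}\}$. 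None of these lies inside $\{t\}$.
\item Hence no admissible set contains $t$ (nor $e$ or $f$). So $\pref(\AF{S}{D})=\{\emptyset\}$ while $\repairs(\calD)=\{\{t\}\}$, and both directions of the equivalence fail.
\end{itemize}
Pre-processing does not help either. Add $g=Q(a,b)$ and the DC $\forall x,y\,\neg(T(x)\land Q(x,y))$. Now $e$ and $f$ are supported and nothing is pruned. Yet $\{t\}$ is still a repair, and it is still undefended against $\{e,f\}$.

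The paper's proof passes over exactly this step (``the converse follows the same line of argument''). The argument for Theorem~\ref{thm:ext-dcs} uses that a repair attacks every fact outside it through some DC conflict, i.e., naive extensions are stable. This breaks as soon as a fact is excluded by LTGDs alone while sharing a conflict of size at least three with facts of the repair.

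When every conflict is binary (e.g., FDs, as in Corollary~\ref{thm:ext-both}), $(\{e\},t)\in R_D$ implies $(\{t\},e)\in R_D$. Then every consistent subset of $\calT$ is admissible, your characterization lemma ``admissible iff consistent subset of $\calT$'' holds verbatim, and preferred extensions coincide with repairs. For general DCs the theorem cannot be proved as stated; it needs an extra hypothesis or a different encoding.
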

		\begin{proof}
			The correctness follows from the proofs of Theorem~\ref{thm:ext-dcs} and~\ref{thm:ext-tgds}.
			The conflict-freeness and admissibility of $\calP$ implies that each DC and LTGD in $B$, respectively, is true in $\calP$.
			The converse follows the same line of argument. 
			Finally, the maximality of repairs in $\calD$ corresponds to the maximality of extensions in the resulting SETAF $\AF{S}{D}$.
		\end{proof}

		Next we establish that allowing DCs and LTGDs renders the data complexity for repair existence ($\REP$), as well as credulous ($\somerepair$) and skeptical ($\allrepair$) reasoning for facts, same as, respectively,  
		the existence, credulous and skeptical reasoning for preferred semantics for AFs.
		These lower bounds are proven for the case of FDs and IDs (see Theorem~\ref{thm:cred-both-fixed} and~\ref{thm:skep-both}).
		Here we prove that the upper bounds hold even if we extend FDs to DCs and IDs to LTGDs, respectively.
		However, the set of ICs has to be fixed.

		\begin{theorem}\label{thm:cred-both-extended}
			The problems 
			$\REP$ and $\somerepair$ are both 
			\NP-complete in the data complexity for DCs and LTGDs.
		\end{theorem}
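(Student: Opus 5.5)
The plan is to prove the two bounds separately. The lower bounds come for free: the paper states that \NP-hardness of $\REP$ and $\somerepair$ is established for a \emph{fixed} set of FDs and IDs (Theorem~\ref{thm:cred-both-fixed}). FDs are DCs and IDs are LTGDs, so hardness transfers to data complexity for DCs and LTGDs. The real work is therefore the \NP\ upper bound for a fixed set $B=D\cup L$.

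For membership, I would guess a candidate set and verify it in polynomial time. For $\somerepair$ with input $\calT$ and fact $s$, guess a subset $\calP\subseteq\calT$ with $s\in\calP$ and check that $\calP\models B$. Since $B$ is fixed, each constraint is a first-order sentence of constant size, so model checking takes time $|\calT|^{O(1)}$ by enumerating all assignments of its variables into $\dom(\calT)$.

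The subtle point is that we need not check maximality. Any consistent $\calP$ containing $s$ extends to a subset-maximal consistent set within the finite $\calT$, that is, to a repair containing $s$. This works because consistency is only required of the final set, not preserved along chains; maximal elements of the finite family of consistent subsets containing $\calP$ always exist. Hence $s$ lies in some repair iff some consistent $\calP\ni s$ exists, which is an \NP\ condition. Similarly, $\REP$ asks for a non-empty repair. A non-empty repair exists iff some non-empty consistent subset exists, because any non-empty consistent subset extends to a repair, which is then non-empty. So we guess a non-empty $\calP$ and check $\calP\models B$.

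Alternatively, I could phrase membership through Theorem~\ref{thm:ext-both-setafs}. Here $\calS_\calD$ is polynomial-time constructible for fixed $B$, since conflicts and supports are found by enumerating constant-size assignments. Repairs then coincide with preferred extensions, and credulous acceptance and non-empty existence for preferred semantics reduce to admissible sets, which lie in \NP\ for SETAFs. The direct argument above avoids relying on SETAF complexity facts not stated in the paper, so I would present that one and mention the SETAF view as a remark.

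The main obstacle is the maximality subtlety: one must argue that the guess need not itself be a repair. This is handled by the finite-extension argument above. I would also note explicitly that LTGD violations can reappear when facts are added, which is why the extension argument must rely on finiteness rather than on any monotonicity.
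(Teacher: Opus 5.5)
Your proposal is correct and is essentially the paper's own proof. Membership is shown by guessing a non-empty (resp.\ $s$-containing) subset and checking consistency in polynomial time for the fixed $B$; no maximality check is needed because every consistent subset of the finite database extends to a repair. Hardness is inherited from the fixed set of FDs and IDs in Theorem~\ref{thm:cred-both-fixed}. Your explicit finiteness argument for extending a consistent set to a repair spells out what the paper states in one line.
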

		\begin{proof}
			We prove that $\REPB$ and $\somerepairB$ are both in $\NP$ for any fixed set $B$ of DCs and LTGDs.
			Let $\mathcal D = \langle \calT, B \rangle$ be a constrained database involving a database $\calT$.
			For membership, one needs to guess a repair for $\calD$ that is non-empty, respectively contains $s$. 
			The verification can be done in $\Ptime$ since $B$ is a fixed set~\cite{arming2016complexity}.
			Notice that we do not need to check the maximality, since if there is a non-empty subset of $\calT$ satisfying every IC in $B$ (resp., a subset containing $s$) then there is also a non-empty repair for $\calD$ (containing $s$).
			
			Hardness follows from the case of FDs and IDs proved in Theorem~\ref{thm:cred-both-fixed}.
			To be precise, we construct a fixed set $B$ (resp., $B'$) of FDs and IDs for which 
			the problem $\REPB$ ($\somerepair_{\text B'}$) is $\NP$-hard.
		\end{proof}
		
		\begin{theorem}\label{thm:skep-both-extended}
			The data complexity of   $\allrepair$ is $\PiP$-complete for DCs and LTGDs.
		\end{theorem}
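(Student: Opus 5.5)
We prove membership of $\allrepairB$ in $\PiP$ for every fixed set $B=D\cup L$ of DCs and LTGDs, and obtain hardness from the FDs and IDs case of Theorem~\ref{thm:skep-both}. The plan mirrors the proof of Theorem~\ref{thm:cred-both-extended}.

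For membership we show that the complement is in $\Sigma^p_2$. Given $\calD=\langle\calT,B\rangle$ and a fact $s\in\calT$, the fact $s$ fails to be in all repairs iff some repair $\calP$ omits $s$. The algorithm guesses such a $\calP\subseteq\calT$ with $s\notin\calP$. It then checks two things. First, $\calP\models B$, which is polynomial for fixed $B$ by the repair-checking results of~\cite{arming2016complexity}. Second, $\calP$ is subset-maximal: every $\calP'$ with $\calP\subsetneq\calP'\subseteq\calT$ violates $B$. This maximality check is a universal statement whose matrix is a polynomial-time consistency test, so it is a $\coNP$ condition, and the overall complement lies in $\NP^{\NP}=\Sigma^p_2$.

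The main step needing care is that maximality cannot be reduced to single-fact extensions $\calP\cup\{t\}$. LTGDs are not monotone under adding facts: a larger set may be consistent only when several mutually supporting facts are added together. Hence we keep the full universal quantification over $\calP'$ rather than attempting a polynomial-time maximality test. Equivalently, we can phrase this through Theorem~\ref{thm:ext-both-setafs}. For fixed $B$, the SETAF $\AF{S}{D}$ is constructible in polynomial time, since conflicts and support sets have constant size. Then $\allrepair$ coincides with $\skep_\pref$ on $\AF{S}{D}$, and skeptical preferred acceptance for SETAFs is in $\PiP$ by the known upper bounds for SETAFs. The direct guess-and-check argument above avoids depending on that external result.

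For hardness, Theorem~\ref{thm:skep-both} gives a fixed set of FDs and IDs for which $\allrepair$ is $\PiP$-hard. FDs are DCs and IDs are LTGDs, so this same fixed set lies in our class, and the lower bound transfers immediately to data complexity.
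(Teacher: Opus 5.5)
Your proposal is correct and follows the paper's proof essentially verbatim. Membership is a $\Sigma^p_2$ guess-and-check for the complement: guess $\mathcal{P}$ with $s\notin\mathcal{P}$, check consistency in polynomial time since $B$ is fixed, and check maximality with an NP oracle. Hardness comes from the fixed FD+ID set of Theorem~\ref{thm:skep-both}. Your remark that, under LTGDs, maximality cannot be tested by adding single facts is a correct reason for needing the oracle, which the paper leaves implicit.
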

		\begin{proof}
			For membership, we prove that $\allrepairB$ is in $\PiP$ for any fixed set $B$ of DCs and LTGDs. 
			To this aim, we present the following $\SigmaP$ procedure for the complement of $\allrepairB$.
			Let $\mathcal D = \langle \calT, B \rangle$ be a constrained database involving a database $\calT$.
			One can guess a set $\calP$ of facts in $\calT$ satisfying each IC in $B$ as a counter-example for $s$, that is, $s\not\in \calP$ and $\calP$ is a (candidate) repair for $\calD$, which can be decided in $\Ptime$ since $B$ is fixed~\cite{arming2016complexity}.
			However, one has to use oracle calls to determine whether $\calP$ is a repair (hence maximal).
			 {Then, $\allrepair$ is true for $s$ if there exists no $\calP\subseteq \calT$ such that (i) $\calP$ satisfies each IC in $B$, (ii) $s\not\in \calP$, and (iii) $\calP$ is a repair.}
			This gives the stated upper bound of $\PiP$.
			
			For hardness, we prove that there exists a set $B$ of DCs and LTGDs such that the problem $\allrepairB$ is $\PiP$-hard.
			We prove this claim for the case of FDs and IDs in Theorem~\ref{thm:skep-both}.
		\end{proof}
		
		 {We conclude this section by noting that the combined complexity of $\REP$, $\somerepair$, and $\allrepair$ remains open for now, although the hardness transfer from the data complexity.}
		
		\section{The Case of Less Expressive ICs}\label{sec:rep-afs}
		
		In this section, we consider FDs and IDs, which are restricted classes of ICs inside DCs and LTGDs, respectively. 
		The two sub-cases are interesting due to the fact that one only requires binary conflicts/supports. 
		As a result, the classical Dung's AFs with binary attacks suffice for encoding repairs into extensions.
		
		Following the same theme as in Section~\ref{sec:dcs-tgds}, the first two subsections translate instances containing only one type of ICs to AFs. 
		Then, we combine both (functional and inclusion) dependencies in the third subsection.
		%
		As before, we adopt the construction for FDs from \cite{BienvenuB20}, but without fact priorities, thus resulting in the need for a weaker AF-semantics to capture repairs
		
		Having established that both FDs and IDs can be modeled in AFs via binary attacks, Section~\ref{sec:both} generalizes this approach by allowing both types of dependencies. 
		The primary result of this section is noteworthy: it demonstrates that the integrity constraints can be restricted from (1) DCs to FDs and (2) LTGDs to IDs, while the corresponding encoding into argumentation frameworks preserves the exact semantics, modulo the transition from SETAFs to AFs.
		
		
		\subsection{Simulating Functional Dependencies via AFs}\label{sec:fd}
		We transform an instance $\mathcal D = \langle \calT,D \rangle $ with database $\calT$ and a collection $D$ of FDs to an AF $\mathcal F_\calD$. 
		In the case of FDs, each conflict involves two facts failing some functional dependency $d\in D$.
		As a result, this conflict can be modeled in an AF by drawing a bi-directional attack between the two corresponding arguments.
		For this reason, we adhere to AFs rather than SETAFs. 
		Recall that each FD is defined over a single relation $T$ in the schema of $\calD$. 
		
		\begin{definition}\label{def:af-dep}
			Let $\calD =\langle \calT, D\rangle$ be a constrained database including a database $\calT$ and a collection $D$ of FDs.
			Then, $\mathcal F_{\calD}$ denotes the following AF.
			\begin{itemize}
				\item  $A\dfn \calT$, that is, each $s\in \calT$ is seen as an argument,
				\item $R\dfn \{(s,t),(t,s) \mid s,t \in \calT  \text{ and } \{s,t\}\not\models d  \text{ for some } d \in D\}$.
			\end{itemize}
			We call $\AF{F}{D} $ the argumentation framework generated by $\calD$ and call $R$ the \emph{conflict graph} for $\mathcal D$.
		\end{definition}
		
		Note that, for a given constrained database $\mathcal D$, the framework $\AF{F}{D}$ can be generated in polynomial time. 
		The attack relation $R$ is constructed for each $d\in D$ over $T$ in the schema of $\calD$ by taking each pair $s,t$ of $T$-facts in turn and checking whether $\{s,t\}\models d$ or not.
		Moreover, the set $D$ of FDs does not have to be fixed, as in the case of DCs.
		
		\begin{example}\label{intro:ex-dep}
			Consider $\calD = \langle T,D \rangle$ with database $T=\{s,t,u,v\}$ as depicted inside table in Figure~\ref{fig:ex-dep} and FDs $\{\depas{\text{Emp\_ID}}{\text{Dept}}$, $ \depas{\text{Sup\_ID}}{\text{Building}}\}$ over $T$. 
			Informally, each employee is 
			associated with a unique department and employees supervised by the same supervisor 
			work in the same building.
			Observe that, $\{s,t\}\not\models \depas{\text{Emp\_ID}}{\text{Dept}}$, $\{u,v\}\not\models \depas{\text{Emp\_ID}}{\text{Dept}}$, and $\{t,v\}\not\models \depas{\text{Sup\_ID}}{\text{Building}}$.
			The resulting AF $\AF{F}{D}$ is depicted on the right side  of  Figure~\ref{fig:ex-dep}.
			The preferred (as well as naive and stable) extensions of $\AF{F}{D}$ include $\{s,v\}, \{t,u\}$ and $\{s,u\}$. 
			Clearly, these three are the only repairs for $\calD$.
			%
			\begin{figure}[t]
				\centering
				\begin{tikzpicture}[scale=.8,arg/.style={circle,draw=black,fill=white,inner sep=.75mm}]
					\node[arg] (w) at (6,2) {$s$};
					\node[arg] (x) at (6,0) {$t$};
					\node[arg] (y) at (8,2) {$u$};
					\node[arg] (z) at (8,0) {$v$};
					
					\foreach \f/\t in {y/z,z/y,w/x,x/w,x/z,z/x}{
						\path[-stealth',draw=black, thick] (\f) edge (\t);
					}
					
					\node (table) at (0,1) {
						\begin{tabular}{l @{\hskip5pt}| @{\hskip 5pt}c@{\;}ccc}
							$T$ & Emp\_ID & Sup\_ID & Dept.  & Building \\
							\hline 
							s & TimX3 & JonX1 & Marketing	  & B1  \\
							t & TimX3 & AxeK4 & Sales 		  & B2 \\
							u & JonX1 & JonX1 & Production 	  & B1 \\
							v & JonX1 & AxeK4 & Distribution  & B4 \\
					\end{tabular}};
				\end{tikzpicture}\\[-.75em]
				\caption{Argumentation framework for modelling FDs in Example~\ref{intro:ex-dep}.}\label{fig:ex-dep}
			\end{figure}
		\end{example}
		
		It is easy to observe that any subset $\calP \subseteq \calT$ that satisfies each $d \in D$ contains precisely those facts in $\calT$ that are not pairwise conflicting.
		As a result, such subsets correspond to the naive extensions (maximal conflict-free sets) of $\AF{F}{D}$.
		Moreover, since the attack relation in $\AF{F}{D}$ is symmetric, i.e., $(s,t)\in R$  iff $(t,s) \in R$, the preferred, stable and naive extensions coincide~\cite[Prop. 4 \& 5]{Coste-MarquisDM05}.
		
		 {Observe that every FD is also a DC and hence the AF $\AF{F}{D}$ coincides with the SETAF $\AF{S}{F}$ for the constrained database $\mathcal D$.  As a consequence, we have the following corollary due to Theorem~\ref{thm:ext-dcs}. In particular, each conflict set $C$ (see proof of Theorem~\ref{thm:ext-dcs}) has size two for FDs.}
		\begin{corollary}\label{thm:ext-dep}
			Let $\mathcal D = \langle \calT, D \rangle$ be a constrained database where $D$ is a set of FDs and let $\AF{F}{D}$ denote the argumentation framework generated by $\mathcal D$.
			Then for every subset $\calP\subseteq \calT$, $\calP \in \repairs(\calD)$  iff $\calP\in \sigma(\AF{F}{D})$ for $\sigma \in \{\naive, \stab, \pref\}$.
		\end{corollary}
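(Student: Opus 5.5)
The plan is to reduce the statement to Theorem~\ref{thm:ext-dcs}, with a direct argument via symmetry as a backup. Every FD is in particular a DC, so $\calD=\langle\calT,D\rangle$ is also a constrained database with denial constraints. Theorem~\ref{thm:ext-dcs} therefore applies to the SETAF $\AF{S}{D}$ of Definition~\ref{def:setaf-dc}. It remains to show that $\AF{S}{D}$ and $\AF{F}{D}$ are the same framework, identifying a singleton attack $(\{s\},t)$ with the plain attack $(s,t)$. Both frameworks have $\calT$ as their set of arguments, so only the attack relations need to be compared.

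The key step is to determine $\conflicts(\calD)$ when $D$ consists of FDs. An FD $d$ has the form $\forall\tuple x\forall\tuple y(T(\tuple x)\land T(\tuple y)\land\bigwedge_{i\in I}x_i=y_i\rightarrow\bigwedge_{j\in J}x_j=y_j)$, so a set $C\subseteq\calT$ violates $d$ iff it contains two $T$-facts $s,t$ with $\{s,t\}\not\models d$.

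Hence every minimal violating set is a pair $\{s,t\}$. Singletons cannot violate $d$: a single fact assigned to both atoms agrees with itself on $J$. So every conflict has exactly two elements, and the conflicts are precisely the pairs $\{s,t\}$ with $\{s,t\}\not\models d$ for some $d\in D$. With this, $R$ in Definition~\ref{def:setaf-dc} is $\{(\{s\},t),(\{t\},s)\mid \{s,t\}\in\conflicts(\calD)\}$, which is exactly the symmetric relation of Definition~\ref{def:af-dep}.

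Once the frameworks are identified, conflict-freeness, defence, naivety, admissibility, preferredness and stability in the SETAF with only singleton attackers reduce literally to the AF notions. Hence $\sigma(\AF{S}{D})=\sigma(\AF{F}{D})$ for $\sigma\in\{\naive,\stab,\pref\}$, and Theorem~\ref{thm:ext-dcs} yields the claim.

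The main obstacle I expect is the small bookkeeping point that the SETAF definitions of conflict-freeness and defence specialise correctly to Dung's definitions when all attacking sets are singletons. This is immediate from unfolding the definitions.

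As a sanity check, and as an alternative route if the identification seems too slick, I would argue directly. First, $\calP\subseteq\calT$ satisfies $D$ iff it contains no attacking pair, so the repairs are exactly the naive extensions of $\AF{F}{D}$. Second, since $R$ is symmetric, naive, stable and preferred extensions coincide by \cite[Prop.~4 \& 5]{Coste-MarquisDM05}. The symmetry holds by construction, because every attack is added together with its reverse.
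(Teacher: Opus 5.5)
Your proposal is correct and follows the paper's own route: the paper also observes that every FD is a DC, that each conflict then has size two so the generated AF coincides with the SETAF of Definition~\ref{def:setaf-dc}, and concludes via Theorem~\ref{thm:ext-dcs}. Your backup argument is the same remark the paper makes just before the corollary: repairs are the naive extensions, and naive, stable and preferred extensions coincide for the symmetric attack relation by \cite[Prop.~4 \& 5]{Coste-MarquisDM05}. That route also needs the absence of self-attacks, which your observation that a single fact never violates an FD already supplies.
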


		Corollary~\ref{thm:ext-dep} reproves that a subset-repair for $\calD$ can be computed in polynomial time~\cite{DvorakDunne17}. 
		Moreover, similar to the case of DCs, we can decide if a given fact $s\in \calT$ is in some (or all) repairs, in polynomial time.
		In fact, it follows from the basic properties of functional dependencies that $\somerepair$ is true for  every $s\in \calT$, and $\allrepair$ is true for a fact $s\in \calT$ iff $\{s,t\} \models d$ for each $t\in \calT$ and $d\in D$.
		Furthermore, the latter problem can be decided in polynomial time for an input $\calD$ where the set $D$ of FDs does not have to be fixed (see Remark~\ref{cor:dc-cs}).

		We conclude this section by observing that 
		adding a size restriction on repairs renders the $\REP$ problem $\NP$-hard.
		Observe that the hardness follows due to a one-to-one correspondence between repairs and maximal independent sets of the conflict graph~\cite{LopatenkoB07}.
		However, we strengthen this result and note that the hardness already holds for propositional databases involving a single relation, that is, for a database $T$ with $\dom(T)=\{0,1\}$.
		The following result was proven in the context of team-semantics and maximal satisfying subteams for propositional dependence logic. 
		\begin{proposition}\cite[Theorem 3.32]{Mthesis23}\label{thm:size-dep}
			There is a constrained database $\calD$ including a propositional database $T$ and FDs $D$, such that given $k\in \mathbb N$, the problem to decide whether there is a repair $P\subseteq T$ for $\calD$ such that $|P|\geq k$ is $\NP$-complete.
		\end{proposition}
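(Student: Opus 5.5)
We need to prove the statement: there is a fixed constrained database schema with a propositional table T (dom = {0,1}) and a set D of FDs such that, given $k$, deciding whether some repair $P \subseteq T$ has $|P| \ge k$ is NP-complete. Since the FDs are part of the construction, the FDs may depend on the instance size and only the domain is restricted to $\{0,1\}$.

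\emph{Membership.} Guess $P \subseteq T$ with $|P|\ge k$ and check in polynomial time that every pair $s,t\in P$ satisfies every $d\in D$. Maximality need not be verified: any consistent $P$ with $|P|\ge k$ extends to a repair of size at least $k$, since consistency under FDs is witnessed pairwise by Corollary~\ref{thm:ext-dep}.

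\emph{Hardness.} Reduce from Independent Set on a graph $G=(V,E)$ with $V=\{1,\dots,n\}$ and $E=\{e_1,\dots,e_m\}$. By Corollary~\ref{thm:ext-dep}, repairs are exactly the maximal independent sets of the conflict graph, and every independent set extends to a maximal one. So it suffices to build, in polynomial time, a propositional table $T$ and FDs $D$ whose conflict graph is isomorphic to $G$. Then $G$ has an independent set of size $\ge k$ iff $\calD$ has a repair of size $\ge k$.

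Construction:
\begin{itemize}
\item Create one fact $s_v$ for each vertex $v$.
\item Give the schema attributes $x_1,\dots,x_m$, one per edge, and $y_{j,1},\dots,y_{j,n}$ for each edge $e_j$.
\item For $e_j=\{u,w\}$ with $u<w$, set $s_u(y_{j,\cdot})$ to the all-zero vector except a $1$ in coordinate $u$, and $s_w(y_{j,\cdot})$ to the all-zero vector except a $1$ in coordinate $w$.
\item For every vertex $v\notin e_j$, set $s_v(y_{j,\cdot})$ to the all-zero vector.
\item Set every $x_j$ constantly to $0$.
\item Add the FD $x_j \to y_{j,1}\cdots y_{j,n}$ for each $j$.
\end{itemize}

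Now check the conflicts. Two facts violate $x_j\to \tuple y_j$ iff their $\tuple y_j$-vectors differ. The vectors of $s_u$ and $s_w$ differ from each other and from the zero vector, while all other facts share the zero vector. So $x_j\to \tuple y_j$ is violated by $\{s_u,s_w\}$, and also by $\{s_u,s_v\}$ for every $v\notin e_j$. The conflicts are therefore too many, and this is the main obstacle: with only two values, a single FD per edge induces an equivalence-class (complete multipartite) conflict structure rather than a single edge.

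To fix this, make the conflict for edge $e_j$ fire only when both sides agree on the determining attributes, and let non-endpoints disagree there. Use a binary encoding of vertex identities:
\begin{itemize}
\item Give each edge $e_j=\{u,w\}$ fresh determining attributes $\tuple x_j=(x_{j,1},\dots,x_{j,n})$.
\item Set $s_u(\tuple x_j)=s_w(\tuple x_j)=\mathbf 0$.
\item For every other vertex $v$, set $s_v(\tuple x_j)$ to the unit vector $\mathbf e_v$. These vectors are pairwise distinct, distinct from $\mathbf 0$, and distinct from each other's.
\item Give each edge one dependent attribute $z_j$, with $s_u(z_j)=0$, $s_w(z_j)=1$, and $s_v(z_j)=0$ otherwise.
\item Take the FD $\tuple x_j\to z_j$.
\end{itemize}

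The only pair agreeing on $\tuple x_j$ is $\{s_u,s_w\}$, and this pair disagrees on $z_j$. Hence the FD $\tuple x_j\to z_j$ produces exactly the single conflict $\{s_u,s_w\}$. Taking these FDs over all edges, the conflict graph equals $G$. The table has $n$ facts over $m(n+1)$ Boolean attributes, so the reduction is polynomial.

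Combining membership and hardness with Corollary~\ref{thm:ext-dep} yields NP-completeness.
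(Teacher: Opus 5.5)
Your proof is correct. The paper gives no argument of its own here. It cites Theorem~3.32 of the thesis, which is phrased for maximal satisfying subteams in propositional dependence logic. It also remarks that over unrestricted domains, hardness already follows from the correspondence between repairs and maximal independent sets of the conflict graph.

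Your proof supplies the missing step for the Boolean case: every graph is the FD-conflict graph of a single $\{0,1\}$-valued table. The endpoints of $e_j$ share $\mathbf{x}_j=\mathbf{0}$ and split on $z_j$. Each non-endpoint $v$ gets the private pattern $\mathbf{e}_v$, so $\mathbf{x}_j\to z_j$ is violated by exactly one pair. With Corollary~\ref{thm:ext-dep} and the fact that independent sets extend to maximal ones, this is a direct reduction from Independent Set that can be checked entirely in the database setting. The paper's citation buys brevity and the link to team semantics.

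A few small fixes to the write-up:
\begin{itemize}
\item Drop ``fixed schema'' from your opening sentence. Your relation has arity $m(n+1)$, and this growth is unavoidable: with fixed arity $r$ and Boolean domain a table has at most $2^r$ facts, and the problem is trivial. What you prove is hardness with the attributes and FDs as part of the input, which is the only sensible reading of the statement.
\item State explicitly that the facts $s_v$ are pairwise distinct, since a database is a set and coinciding tuples would merge vertices. For $m\ge 1$ and $v\neq v'$, take any edge $e_j$. If $e_j=\{v,v'\}$, the two facts differ on $z_j$. Otherwise one of them, say $v$, is a non-endpoint, and $\mathbf{e}_v\notin\{\mathbf{0},\mathbf{e}_{v'}\}$.
\item Handle edgeless graphs separately: they are trivial and can be mapped to a fixed instance.
\item Delete the discarded first construction.
\item In the membership part, Corollary~\ref{thm:ext-dep} is not needed, since FD satisfaction is pairwise by definition.
\end{itemize}
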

		
		\subsection{Simulating Inclusion Dependencies via AFs}\label{sec:inc}
		
		Let $\calD = \langle \calT,I\rangle $ be a constrained database with a database $\calT$ and collection $I$ of IDs.
		Similar to the case of LTGDs, for an ID $i\in I$ (say, $i\dfn  T[\tuple{x}]\subseteq T'[\tuple{y}]$), we call $T$ the \emph{source} of $i$. Moreover, we call $T'$ the \emph{target} of $i$.
		For $i \in  I$  and a $\source(i)$-fact $s$, let $t$ be $\target(i)$-fact such that $\{s,t\}\models i$.
		Then we say that $t$ \emph{supports} $s$ for the ID $i\in I$, and $\support{i}{s}$ 
		denotes the set of all such supporting facts for $s$ and an ID $i$. 
		Notice that for each $i\in I$ and $\source(i)$-fact $s$, we have that $\support{i}{s}\subseteq \target(i)$.
		Thus, the elements in $\support{i}{s}$ are facts instead of sets of facts as in the case of LTGDs. 
		Clearly, $\calD\models i$ if and only if $\support{i}{s}\neq\emptyset$ for each $\source(i)$-fact $s$.
		As for LTGDs, we create auxiliary argument $s_i$ for each $\source(i)$-fact $s$, which can be attacked by arguments corresponding to $\target(i)$-facts in $\support{i}{s}$.
		In the following, we formalize this notion and simulate the semantics for IDs via AFs.
		
		\begin{definition}\label{def:af-inc}
			Let $\calD =\langle \calT,I \rangle $ be a constrained database including a database $\calT$ and a collection $I$ of IDs.
			Then $\AF{F}{D}$ is the following AF.
			\begin{itemize}
				\item $A\dfn \calT \cup \{s_i \mid s \text{ is a }\source(i) \text{-fact for } i\in  I\}$,
				\item $R\dfn \{(s_i,s), (s_i,s_i) \mid s \text{ is a }\source(i) \text{-fact for } i\in I \} \cup \{(t,s_i)\mid t\in \support{i}{s} \text{ for } i\in  I\}$.
			\end{itemize}
		\end{definition}
		
		The intuition behind our encoding remains the same as for the case of LTGDs.
		That is, 
		the attack $(s_i,s)$ models that the ID $i\in I$ must be satisfied for the $\source(i)$-fact $s$ and the self-attacks $(s_i,s_i)$ enforce extensions to only contain arguments corresponding to facts.
		However, we only need attacks between arguments (instead of attacks from sets of arguments) as $\support{i}{s}\subseteq \target(i)$ for each $i\in I$ and $\source(i)$-fact $s$. 
		
		
		\begin{example}\label{intro:ex-inc}
			Consider $\calD = \langle T,I \rangle $ with database $T=\{s,t,u,v\}$ and IDs $I\dfn \{{\text{Sup\_ID}\subseteq \text{Emp\_ID}}, \\ {\text{Covers\_For}\subseteq \text{Dept}}\}$.
			For brevity, we denote IDs by $I= \{1,2\}$.
			The database and the supporting facts $\support{i}{w}$ for each $i\in I, w\in T$ are depicted in the table inside Figure~\ref{fig:ex-inc}.
			Informally, a supervisor is also an employee and each employee 
			is assigned a department to cover if that department is short on employees.
			For example, $s(\text{Sup\_ID})=t(\text{Emp\_ID})$, $s(\text{Covers\_For})=t(\text{Dept})=u(\text{Dept})$, and therefore $\support{1}{s}=\{t\}$, $\support{2}{s}=\{t,u\}$.
			Then we have the AF $\AF{F}{D}$ as depicted in Figure \ref{fig:ex-inc}.
			The AF $\AF{F}{D}$ has a unique preferred extension, given by $\{s,t\}$. 
			Clearly, this is also the only repair for $\calD$.
			\begin{figure}
				\centering
				\begin{tikzpicture}[scale=.8, arg/.style={circle,draw=black,fill=white,inner sep=.75mm}]
					\node[arg] (s) at (-.8,2) {$s$};
					\node[arg] (s1) at (-2,2) {$s_1$};
					\node[arg] (s2) at (-1,3) {$s_2$};
					\node[arg] (t) at (-.8,0) {$t$};
					\node[arg] (t1) at (-2,0) {$t_1$};
					\node[arg] (t2) at (-1,-1) {$t_2$};
					\node[arg] (u) at (1.7,1.8) {$u$};
					\node[arg] (u1) at (3,2) {$u_1$};
					\node[arg] (u2) at (2,3) {$u_2$};
					\node[arg] (v) at (1.5,0) {$v$};
					\node[arg] (v1) at (3,0) {$v_1$};
					\node[arg] (v2) at (2,-1) {$v_2$};
					\node (table) at (-1,5.5) {
						\begin{tabular}{l@{\hskip 5pt}|@{\hskip5pt}c@{\;}ccc@{\hskip5pt}}
							$T$ & Emp\_ID  & Sup\_ID & Dept. & Covers\_For\\ \hline
							s 	& JonX1  &  AxeK4 		&  Production 		& Marketing \\
							t 	& AxeK4  &  AxeK4 		&  Marketing  		& Production \\
							u 	& TimX3  &  JonX1 		&  Marketing  		& Distribution \\
							v 	& JonX1  &  AxeK4 		&  Distribution	 	& R\&D \\
					\end{tabular}};
					\node (table) at (6,5.5) {
						\begin{tabular}{cc}
							$S_{1}$ & $S_{2}$\\ \hline
							t & t,u   \\
							t & s  \\
							s,v & v  \\
							t & - \\
						\end{tabular}
					};
				
				\foreach \f/\t in {s1/s,s2/s,t1/t,t2/t,u1/u,u2/u, v1/v,v2/v}{
					\path[-stealth',draw=blue] (\f) edge (\t);
				}
				\foreach \f/\t in {u1/u1,v1/v1}{
					\path[-stealth',draw=red] (\f) edge[loop right] (\t);
				}
				\foreach \f/\t in {s1/s1,s2/s2,t1/t1,t2/t2,u2/u2,v2/v2}{
					\path[-stealth',draw=red] (\f) edge[loop left] (\t);
				}
				\foreach \f/\t in {t/s1,t/s2,u/s2, t/t1, s/t2, s/u1, v/u1, v/u2, t/v1}{
					\path[-stealth',draw=black] (\f) edge[bend left] (\t);
				}
				
			\end{tikzpicture}\\[-.75em]
			\caption{The AF $\AF{F}{I}$ modelling $\calI$ in Example~\ref{intro:ex-inc}: the red self-loops together with blue arcs depict the attacks for each fact $w\in T$ due to IDs $i\in I$ and the black arcs model the attacks due to the support set $\support{i}{w}$.}\label{fig:ex-inc}
		\end{figure}
	\end{example}
	
	It is worth mentioning that $\{s,t,u,v\}$ constitutes a naive extension for $\AF{F}{D}$ in Example~\ref{intro:ex-inc}, although this is not a repair for $\mathcal D$.
	Clearly, the semantics for IDs in $\AF{F}{D}$ requires admissibility (defending against attacking arguments).
	
	 {Similar to the case of FDs, we observe that every ID is also an LTGD and hence the AF $\AF{F}{D}$ coincides with the SETAF $\AF{S}{D}$ for the constrained database $\mathcal D$. 
		As a consequence, the results for LTGDs transfer to IDs, modulo a translation from SETAFs to AFs.
		In particular, this allows us to apply Theorem~\ref{thm:ext-tgds} which establishes the following claim. 
	}

	\begin{corollary}\label{thm:ext-inc}
		Let $\mathcal D = \langle \calT, I \rangle$ be a constrained database where $I$ is a set of IDs and let $\AF{F}{D}$ denote the argumentation framework generated by $\mathcal D$.
		Then for $\calP\subseteq \calT$, $\calP \in \repairs(\mathcal D)$ iff $\calP\in \pref(\AF{F}{D})$.
		Moreover, $\AF{F}{D}$ has exactly one preferred extension.
	\end{corollary}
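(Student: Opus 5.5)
The plan is to reduce the corollary to Theorem~\ref{thm:ext-tgds} by showing that the AF $\AF{F}{D}$ of Definition~\ref{def:af-inc} is literally the SETAF $\AF{S}{D}$ of Definition~\ref{def:setaf-tgd}, once each ID is read as an LTGD and singleton attacks $(\{t\},s_i)$ are identified with binary attacks $(t,s_i)$.

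First I check that every ID $i\dfn T[\tuple x]\subseteq T'[\tuple y]$ is an LTGD of the form $\forall \tuple x_1\forall\tuple x_2 (T(\tuple x_1,\tuple x_2)\rightarrow \exists \tuple x_3\, T'(\tuple x_2,\tuple x_3))$ (up to reordering attributes), with $\source(i)=T$ and a right-hand side consisting of the single atom $\psi_1=T'$. Hence, for a $\source(i)$-fact $s$, every set in the LTGD-support $\support{i}{s}$ has the form $h'(\psi(\tuple x,\tuple y))$ with a single atom, i.e.\ it is a singleton $\{t\}$ with $t$ a $\target(i)$-fact satisfying $\{s,t\}\models i$. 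Conversely, every such $t$ yields the supporting singleton $\{t\}$. So the LTGD-support of $s$ is exactly $\{\{t\}\mid t\in\support{i}{s}\}$, where $\support{i}{s}$ denotes the ID-support of Section~\ref{sec:inc}.

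Next I compare the two frameworks component by component. The argument sets coincide: both consist of $\calT$ together with one auxiliary $s_i$ for each $\source(i)$-fact $s$. The attacks $(s_i,s)$ and $(s_i,s_i)$ are identical in both. The set-attacks $(S,s_i)$ with $S\in\support{i}{s}$ become $(\{t\},s_i)$, which by the paper's convention are written $(t,s_i)$ and match the third kind of attack in Definition~\ref{def:af-inc}.

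It remains to note that conflict-freeness, defense, admissibility and preferredness for a SETAF with only singleton attackers reduce exactly to Dung's definitions for the corresponding AF. For conflict-freeness, $(2^S\times S)\cap R=\emptyset$ is equivalent to $(S\times S)\cap R=\emptyset$ when every attacker is a singleton. For defense, a subset $S''\subseteq S$ attacking some member of a singleton attacker $\{s'\}$ is the same as an element of $S$ attacking $s'$; the only subtlety is the empty set, but no attack in $R$ has an empty source. Hence $\pref(\AF{F}{D})=\pref(\AF{S}{D})$, and Theorem~\ref{thm:ext-tgds} gives both the equivalence with $\repairs(\calD)$ and the uniqueness of the preferred extension.

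I expect no real obstacle here. The only point needing care is the first step: the LTGD semantics, with its extension $h'$ of $h$ on the existential variables, must yield exactly the singleton supports given by the ID definition $\{s,t\}\models i$, including any repeated variables within the ID's attribute sequences.
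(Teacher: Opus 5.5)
Your proposal is correct and follows the paper's own argument. The paper likewise observes that every ID is an LTGD, so $\AF{F}{D}$ coincides with the SETAF $\AF{S}{D}$ once singleton attacks are read as binary ones, and it then invokes Theorem~\ref{thm:ext-tgds}; your component-wise comparison and your check that the SETAF semantics with singleton attackers reduce to Dung's semantics simply make that identification explicit. On the step you flagged: the supports match only when ID-support is read as the witness condition $s(\tuple x)=t(\tuple y)$, which is how the paper's examples use it. Taken literally as $\{s,t\}\models i$, the condition can differ from the witness condition when the source and target relations coincide.
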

	The observation for LTGDs that some frameworks may not have any stable extension also applies to the special case of IDs and their corresponding AFs.
	As before, this holds because some arguments can neither be accepted in an extension, nor attacked by arguments in an extension. 
	The argument corresponding to the fact $v$ in Example~\ref{intro:ex-inc} depicts such an argument.
	Regarding the pre-processing, we still obtain the unique preferred extension which is also naive and stable.
	Importantly, the set $I$ of IDs does not have to be fixed to require the polynomial time pre-processing.
	Following Remark~\ref{cor:ltgd-cs}, $\somerepair$ and $\allrepair$ is true for every $s\in \calT$ such that $s\in \pre(\AF{F}{D})$.

	\begin{example}[Continued]
		Reconsider the constrained database $\calD= \langle T, I\rangle $ from Example~\ref{intro:ex-inc}.
		Observe that the argument $v$ is not defended against $v_2$ and therefore cannot be in a repair. 
		The pre-processing removes $\{v,v_1,v_2\}$ and all the edges to/from arguments in this set.
		This has the consequence that all the arguments which are only defended by $v$ are no longer defended (e.g., $u$).
		Consequently, the arguments $\{u,u_1,u_2\}$ have to be removed as well. 
		After repeating the same process for $u$, we notice that no further argument needs to be removed. 
		Hence, the set $\{s,t\}$ yields a repair for $\calD$ as well as a $\sigma$-extension in the reduced AF for $\sigma\in\{\naive,\stab,\pref\}$.
	\end{example}
	
	Our construction of AFs for IDs relies on self-attacking arguments.
	However, these self-attacks are restricted to only auxiliary arguments and allow us to avoid taking these arguments in any extension.
	This results in obtaining a precise connection between repairs of a DB and extensions of the resulting AF.
	Without self-attacking arguments, one gets extensions which may contain auxiliary arguments and thus a precise connection between repairs and extensions might be lost.
	This is further highlighted in the following example.
	
	\begin{example}\label{without-selfloops}
		Consider a single binary relation $T(x,y)$ in the schema, a constrained database $\calD = \langle T, I\rangle $ with database $T=\{s, t, u, v\}$ and a single ID $T[x] \subseteq T[y]$.
		The database table as well as its resulting AF $\AF{F}{D}$ (without self-attacks for auxiliary arguments) is depicted in Figure~\ref{fig:without-selfloops}. 
		Observe that $\AF{F}{D}$ admits the following four preferred extensions:
		$\{\{s,t, u_1, v\}, \{s,t, u_1, v_1\}, \{s_1,t_1, u_1, v\}, \{s_1,t_1, u_1, v_1\}\}$.
		However, there is a unique repair for $\calD$ given as $\{s, t, v\}$.
		\begin{figure}[t]
			\centering
			\begin{tikzpicture}[scale=.8,arg/.style={circle,draw=black,fill=white,inner sep=.75mm}]
				\node[arg] (s) at (6,2) {$s$};
				\node[arg] (t) at (6,0) {$t$};
				\node[arg] (u) at (8,2) {$u$};
				\node[arg] (v) at (8,0) {$v$};
				
				\node[arg,fill = red!50] (s1) at (4,1.7) {$s_1$};
				\node[arg, fill = red!50] (t1) at (4,-.7) {$t_1$};
				\node[arg, fill = red!50] (u1) at (10,1.7) {$u_1$};
				\node[arg, fill = red!50] (v1) at (10,-.7) {$v_1$};
				
				\foreach \f/\t in {s1/s,t1/t,u1/u, v1/v}{
					\path[-stealth',draw=blue] (\f) edge (\t);
				}
				\foreach \f/\t in {t/s1,t/t1, v/v1}{
					\path[-stealth',draw=black] (\f) edge[bend left] (\t);
				}

				\node (table1) at (-2,1) {
					\begin{tabular}{l@{\hskip 5pt}|@{\hskip5pt}c@{\hskip5pt}c@{\hskip5pt}c@{\hskip5pt}}
						$T$ & $x$  & $y$ \\ \hline
						$s$ & b & a\\
						$t$ & b & b\\
						$u$ & d & c\\
						$v$ & e & e\\
				\end{tabular}};
				\node (table2) at (0,1) {
					\begin{tabular}{c}
						$S_{1}$ \\ \hline
						$t$  \\
						$t$  \\
						- \\
						$v$  \\
					\end{tabular}
				};
				
			\end{tikzpicture}
			\caption{Argumentation framework without self-attacks for modelling IDs in Example~\ref{without-selfloops}. The auxiliary arguments are highlighted in red for convenience.}\label{fig:without-selfloops}
		\end{figure}
	\end{example}
	
	Towards the end of our paper (Section~\ref{sec:self-attacks}), we will discuss how and whether we can model IDs via another translation that avoids self-attacking arguments. 
	
	\subsection{Simulating Functional and Inclusion Dependencies via AFs}\label{sec:both}

	Consider an instance $\calD = \langle \calT, B\rangle $ of a database $\calT$ and a collection $B = D\cup I$ of FDs $D$ and IDs $I$.
	We construct the AF~$\AF{F}{D}=(A,R_\text{D}\cup R_\text{I})$ as follows.
	\begin{itemize}
		\item $A \dfn  \{s \mid s\in \calT \} \cup \{s_i \mid s \text{ is a }\source(i) \text{-fact for } i\in I\}$,
		\item $R_\text{D} \dfn \{(s,t),(t,s) \mid s,t \in \calT  \text{ and } \{s,t\}\not\models d  \text{ for some } d \in D\}$,
		\item $R_{\text{I}}\dfn \{(s_i,s), (s_i,s_i) \mid s \text{ is a }\source(i) \text{-fact for } i\in I \}\cup \{(t,s_i)\mid t\in \support{i}{s} \text{ for } i\in I\}$.
	\end{itemize}
	
	As before, one may apply the pre-processing as a first step, thereby removing those facts from $\calT$ failing some $i\in I$.
	However, the discussion as for the case of LTGDs still applies for IDs.
	The following example illustrates that even if we apply pre-processing, some facts may not be accepted in combination with each other due to the presence of FDs.
	
	\begin{example}\label{intro:ex-both}
		Consider $\calD = \langle T,B \rangle$ with database $T=\{s,t,u\}$ and ICs $B= D\cup I$ where $D= \{\depas{\text{Sup\_ID}}{\text{Building}}\}$ and $I=\{{\text{Covers\_For}\subseteq \text{Dept}}\}$.
		Moreover, the database $T$ and the support $\support{\incas{\text{Covers\_For}}{\text{Dept}}}{w}$ for each $w\in T$ is depicted in the table inside Figure~\ref{fig:ex-both}.
		Then,  $\{s,t\}\not\models \depas{\text{Sup\_ID}}{\text{Building}}$, and $\{t,u\}\not\models \depas{\text{Sup\_ID}}{\text{Building}}$.
		The resulting AF $\AF{F}{D}$ is shown in Figure \ref{fig:ex-both}, where the edges due to the IDs are depicted in red and blue. 
		The only preferred extension for $\AF{F}{D}$ is $\{t\}$.
		Also, the only repair for $\calD$ is $\{t\}$.
		Further, although $\{s,u\}$ is preferred for $\AF{F}{D'}$ where $\calD' =\langle T,D\rangle$ (ignoring red and blue arcs), and $\{s,t, u\}$ is preferred for $\AF{F}{D''}$ where $\calD'' =\langle T,I\rangle$ (ignoring black arcs), none of them is preferred for $\AF{F}{D}$.
		\begin{figure}
			\centering
			\begin{tikzpicture}[scale=.8,arg/.style={circle,draw=black,fill=white,inner sep=.75mm}]
				\node[arg] (s) at (-5,2) {$s$};
				\node[arg] (s1) at (-5,0) {$s_1$};			
				\node[arg] (t) at (-3,2) {$t$};
				\node[arg] (t1) at (-3,0) {$t_1$};
				\node[arg] (u) at (-1,2) {$u$};
				\node[arg] (u1) at (-1,0) {$u_1$};
				\node (table) at (-3.5,3.5) {
					\begin{tabular}{l@{\hskip 5pt}|@{\hskip5pt}c@{\;}cccc}
						$T$ & Emp\_ID  & Sup\_ID & Dept. & Building & Covers\_for \\
						\hline 
						s & JonX1 & AxeK4 & Production	& B4  & Sales \\
						t & TimX3 & AxeK4 & Sales  			& B2  & Sales \\
						u & AxeK4 & AxeK4 & Marketing  		& B4  & Production\\
				\end{tabular}};
			\node (table) at (3.5,3.5) {
				\begin{tabular}{c}
					$S_i$ \\ \hline
					t \\
					t \\
					s\\
				\end{tabular}
			};
			
			\foreach \f/\t in {s/t,t/s,t/u,u/t}{
				\path[-stealth',draw=black] (\f) edge (\t);
			}
			\foreach \f/\t in {t/s1,t/t1,s/u1,s1/s,t1/t,u1/u}{
				\path[-stealth', draw=blue] (\f) edge (\t);
			}
			\foreach \f/\t in {s1/s1,t1/t1,u1/u1}{
				\path[-stealth',draw=red] (\f) edge[loop right] (\t);
			}
		\end{tikzpicture}\\[-.75em]
		\caption{Argumentation framework for modelling dependencies in Example~\ref{intro:ex-both}. Black arcs depict conflicts due to functional, and blue ones due to inclusion dependency.}\label{fig:ex-both}
	\end{figure}
\end{example}

Similar to their expressive counterparts, allowing both types of ICs (FDs and IDs) has the effect that the preferred and naive extensions do not coincide in general. 
Moreover, the repairs correspond to preferred semantics for AFs.

\begin{example}[Cont.]
	Reconsider the constrained database $\calD$ from Example~\ref{intro:ex-both}.
	Then, $\{s,u\}$ is a naive extension for $\AF{F}{D}$ but not preferred.
	Moreover, $t$ is the only fact for which $\somerepair$ and $\allrepair$ is true. 
\end{example}


\begin{corollary}\label{thm:ext-both}
	Let $\mathcal D = \langle \calT, B \rangle$ be a constrained database where $B$ includes FDs and IDs.
	Further, let $\AF{F}{D}$ denote the argumentation framework generated by $\mathcal D$.
	Then for every subset $\calP\subseteq \calT$, $\calP \in \repairs(\mathcal D)$ iff $\calP \in \pref(\AF{F}{D})$.
\end{corollary}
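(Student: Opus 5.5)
The plan is to derive Corollary~\ref{thm:ext-both} from Theorem~\ref{thm:ext-both-setafs}. The idea is to show that, for $B=D\cup I$ with $D$ a set of FDs and $I$ a set of IDs, the AF $\AF{F}{D}=(A,R_\text{D}\cup R_\text{I})$ is literally the SETAF $\AF{S}{D}=(A,R_D\cup R_L)$ of Definition~\ref{def:setaf-dc-tgd}, once each singleton attack $(\{a\},b)$ is identified with $(a,b)$. Since every FD is a DC and every ID is an LTGD, $\calD$ is an admissible input to Theorem~\ref{thm:ext-both-setafs}. The equivalence $\calP\in\repairs(\calD)$ iff $\calP\in\pref(\AF{S}{D})$ then transfers verbatim, because for SETAFs whose attacks all have singleton attacking sets, conflict-freeness, defense and hence admissibility and preferredness agree with the Dung notions.

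First, the argument sets coincide. Both contain $\calT$ plus one auxiliary argument $s_i$ for each $\source(i)$-fact $s$ and each $i\in I$.

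Second, $R_\text{I}=R_L$. The attacks $(s_i,s)$ and $(s_i,s_i)$ appear in both constructions. An ID $i$ has a single atom on its right-hand side, so each element of $\support{i}{s}$ in the LTGD sense is a singleton $\{t\}$ with $t\in\target(i)$ and $\{s,t\}\models i$. The set-attacks $(\{t\},s_i)$ are therefore exactly the attacks $(t,s_i)$ of $R_\text{I}$.

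Third, and this is the step needing care, $R_\text{D}=R_D$. I will show that every conflict in $\conflicts(\calD)$ for an FD $d$ has exactly two elements. A single fact $s$ satisfies every FD, since taking both universally quantified tuples equal to $s$ forces all equalities. So a minimal violating set has size at least $2$. The FD formula mentions only two atoms, so any violation is witnessed by a pair $\{s,t\}$, and minimality gives size exactly $2$. Then $(C\setminus\{t\},t)$ ranges over $(s,t)$ and $(t,s)$ for the pairs with $\{s,t\}\not\models d$, which is precisely $R_\text{D}$. The main subtlety is that minimality is taken relative to a single $d$. Since $R_D$ is defined as a union over all $d$, as in Definition~\ref{def:af-dep}, this causes no problem.

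Combining these three steps gives $\AF{F}{D}=\AF{S}{D}$ as frameworks, and Theorem~\ref{thm:ext-both-setafs} yields the claim. For the reader who prefers a direct check, the proof could alternatively repeat the argument of Theorem~\ref{thm:ext-both-setafs} as follows. Conflict-freeness encodes the FDs. Defense of each fact $s$ against each $s_i$ encodes the IDs, because the only attackers of $s_i$ are its supporters. Maximality matches because auxiliary arguments are self-attacking and so never belong to an extension.
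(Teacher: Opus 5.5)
\textbf{Gap: the step that carries the proof rests on a false theorem.} Your identification of $\mathcal{F}_{\mathcal{D}}$ with $\mathcal{S}_{\mathcal{D}}$ is right: FD conflicts have exactly two elements, ID supports are singletons, and singleton set-attacks reduce to the Dung notions. This is also how the paper obtains the corollary, as the FD/ID instance of Theorem~\ref{thm:ext-both-setafs}. The problem is that Theorem~\ref{thm:ext-both-setafs} is false for general DCs, so citing it proves nothing.

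Take unary relations $A,B,D,F$ and the facts $a=A(1)$, $b=B(1)$, $d=D(1)$, $f=F(2)$. Take the DC $\forall x\,\neg(A(x)\wedge B(x)\wedge D(x))$ and the IDs $\ell_1=\forall x\,(B(x)\to F(x))$ and $\ell_2=\forall x\,(D(x)\to F(x))$. Since $b$ and $d$ have no support, the unique repair is $\{a,f\}$. In $\mathcal{S}_{\mathcal{D}}$, however, $a$ is attacked by $(\{b,d\},a)$. The only attackers of $b$ and $d$ are $\{a,d\}$, $\{a,b\}$ and the self-attacking $b_{\ell_1}$, $d_{\ell_2}$. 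So any admissible set containing $a$ would need $b$ or $d$, neither of which can be defended. The unique preferred extension is therefore $\{f\}$. What fails is the direction ``repair $\Rightarrow$ admissible''. A repair need not counter-attack a collective attacker $C\setminus\{t\}$, because the other members of $C$ may be excluded for LTGD reasons, and then the repair does not attack them.

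\textbf{Fix.} The corollary is still true, and the reason is something your argument never uses. For FDs every conflict has size two and $R_{\mathrm{D}}$ is symmetric, so each fact counter-attacks its FD attackers by itself. Make your closing ``direct check'' the actual proof, and add this step; as written it only mentions defense against the $s_i$. The claim to prove is: $S\subseteq A$ is admissible in $\mathcal{F}_{\mathcal{D}}$ iff $S\subseteq\mathcal{T}$ and $S\models B$.
\begin{itemize}
\item If $S$ is admissible, it avoids the self-attacking $s_i$, and conflict-freeness gives the FDs. Defense against each $s_i$ gives the IDs, since apart from itself $s_i$ is attacked only by the supporters of $s$ for $i$.
\item If $S\subseteq\mathcal{T}$ is consistent, it is conflict-free. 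It defends each $s\in S$ against an FD attacker $t$ via $(s,t)\in R_{\mathrm{D}}$, and against $s_i$ via a supporter lying in $S$.
\end{itemize}
Hence the preferred extensions are exactly the maximal consistent subsets of $\mathcal{T}$, i.e.\ the repairs.

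One smaller point: in your step~2, read ``$t$ supports $s$'' as ``$t$ witnesses the existential for $s$'', as the paper's examples do. The literal condition $\{s,t\}\models i$ differs from this when source and target coincide.
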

%

Next we establish the data and combined complexity results for $\REP, \somerepair$ and $\allrepair$.
Observe that a fixed collection of integrity constraints suffice for establishing the hardness results (see Table \ref{table:cont}).
Whereas, the membership in each case also applies to the case when the set of constraints is given as input (hence not fixed).

\begin{theorem}\label{thm:cred-both-fixed}
	The problems $\REP$ and $\somerepair$ are in $\NP$ for constraints that consist of FDs and IDs.
	Moreover, there exists a set $B$ of FDs and IDs such that the problems 
	$\REPB$ and $\somerepairB$ are both 
	\NP-hard.
\end{theorem}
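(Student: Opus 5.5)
Membership is the routine half; the main work is constructing one fixed set $B$ of FDs and IDs for which $\REPB$ and $\somerepairB$ are $\NP$-hard.

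\emph{Membership.} Given $\calD=\langle\calT,B\rangle$ with FDs and IDs, possibly not fixed, I will guess a non-empty $\calP\subseteq\calT$ (for $\somerepair$, one with $s\in\calP$). I then check $\calP\models B$ in polynomial time. For each FD, compare all pairs of facts of the relevant relation. For each ID $T[\tuple X]\subseteq T'[\tuple Y]$, check that every $T$-fact has a $T'$-fact in $\calP$ agreeing on the projected attributes. Both checks are polynomial in combined complexity because FDs and IDs are evaluated pairwise. Maximality need not be checked: any consistent $\calP$ extends to a repair by greedily adding facts while consistency is preserved, and that repair is non-empty (resp.\ contains $s$).

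\emph{Hardness.} I will reduce from 3SAT, and I need one fixed $B$, so the formula must be encoded entirely in the data. The plan uses three relations (or a single relation with a type attribute to keep everything unirelational):
\begin{itemize}
\item a relation $V(\mathit{var},\mathit{val})$ with the FD $\mathit{var}\to\mathit{val}$, so that a consistent subset picks at most one truth value per variable;
\item a relation $L(\mathit{cl},\mathit{var},\mathit{val})$ with one fact per literal occurrence, together with the ID $L[\mathit{var},\mathit{val}]\subseteq V[\mathit{var},\mathit{val}]$, so that a kept literal fact must be supported by the matching truth-value fact;
\item a relation $C(\mathit{cl})$ with one fact per clause, together with the ID $C[\mathit{cl}]\subseteq L[\mathit{cl}]$, so that a kept clause fact needs a true literal.
\end{itemize}

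To force all clauses simultaneously, I add a chain. A designated fact $g$ of a relation $G(\mathit{cl},\mathit{next})$ links clause $j$ to clause $j{+}1$. I impose the IDs $G[\mathit{cl}]\subseteq C[\mathit{cl}]$ and $G[\mathit{next}]\subseteq G[\mathit{cl}]$, with the last link pointing to a sentinel self-loop. Under these IDs, keeping any $G$-fact forces keeping all subsequent $G$-facts and hence all clause facts. For $\somerepairB$ I ask about the first $G$-fact; a repair contains it iff the formula is satisfiable.

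For $\REPB$ the difficulty is that $V$-facts alone already form a non-empty consistent set. I would fix this by adding the ID $V[\mathit{var}]\subseteq G[\mathit{var}]$, realised by giving every fact a common ``anchor'' column that must be included in the $G$ chain. Then any non-empty consistent set forces the whole chain. The chain forces all clauses, which forces a true literal in each clause, which forces consistent truth values via the FD.

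\textbf{Main obstacle.} The delicate step is designing these anchoring IDs with a constant number of attributes so that every non-empty consistent subset, not just a chosen one, yields a satisfying assignment. This requires checking that no smaller self-supporting cycles of facts exist. If the direct construction gets messy, I will instead reduce from the $\NP$-hardness of $\cred_{\pref}$ via Corollary~\ref{thm:ext-both} is not available in that direction. Otherwise I will fall back on adapting the known hardness for inclusion logic maximal subteams~\cite{HannulaH22}, combined with dependence atoms.
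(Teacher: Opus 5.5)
Your membership argument and your reduction for $\somerepairB$ are essentially right and close to the paper's. An FD makes truth values functional, IDs express support, and a chain of clause facts forces every clause; the paper merges your $V$ and $L$ into one relation $F(t_0,u_0,t_1,u_1)$ of literal occurrences with the FD $t_0\to u_0$. Two small fixes are needed. First, adding facts one at a time while preserving consistency need not reach a repair under IDs, because two facts that support each other can only be added jointly. What you actually need, that every consistent subset lies in some repair, holds simply because $\calT$ is finite. Second, a sentinel $G(e,e)$ would require a fact $C(e)$ under $G[\mathit{cl}]\subseteq C[\mathit{cl}]$, which in turn needs dummy $L$- and $V$-support; otherwise no $G$-fact can ever be kept. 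End the chain with $G(c_m,c_m)$ or close it into a cycle.

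The genuine gap is the hardness of $\REPB$, which you leave open. Suppose the anchor column of $V$-, $L$- and $C$-facts only has to be matched by \emph{some} $G$-fact while the chain stays linear. Then $\{G(c_m,c_m),C(c_m),L(c_m,x,v),V(x,v)\}$ is non-empty and consistent whether or not $\varphi$ is satisfiable, because entering the chain at its last link forces only the last clause. The missing idea is to make every fact lead into a structure that forces all clauses from \emph{any} entry point. The paper does this in three steps:
\begin{itemize}
\item It closes the chain into a cycle $C(\text{sat},c_1),C(c_1,c_2),\dots,C(c_m,\text{sat})$ under $C[u_2]\subseteq C[t_2]$, so one $C$-fact forces all of them.
\item Every $F$-fact carries the constant sat in $u_1$, and $F[u_1]\subseteq C[u_2]$ pulls it into the cycle.
\item The extra cycle node sat is supported, via $C[t_2]\subseteq F[t_1]$, by a dummy fact $F(\text{sat},\text{sat},\text{sat},\text{sat})$.
\end{itemize}
Since there is no relation of bare truth values, every non-empty consistent set contains the whole cycle and hence encodes a satisfying assignment. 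In your schema, giving all $V$-, $L$- and $C$-facts the anchor value $c_1$, with IDs into $G[\mathit{cl}]$, would also work, since they then all depend on the first link.

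Your fallbacks do not close the gap. Corollary~\ref{thm:ext-both} translates databases into AFs, not AFs into databases, so it gives no reduction from $\cred_{\pref}$. The Hannula--Hella result on maximal subteams for inclusion logic is a polynomial-time algorithm (it is the origin of this paper's pre-processing), so there is no hardness to adapt. Combining it with dependence atoms is precisely the construction you still owe.
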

\begin{proof}
	The membership follows analogous to the proof of Theorem~\ref{thm:cred-both-extended} in both cases.
	Let $\mathcal D = \langle \calT, B \rangle$ be a constrained database where $B$ consists of FDs and IDs. Further, let $s\in \calT$ be a fact.
	Observe that for constraints involving FDs and IDs, the verification that a guessed set of facts satisfies  $B$ can be performed in polynomial time given $\calD$ as input.
	Therefore, the $\NP$-membership applies to $\REP$ and $\somerepair$.

	For hardness, we reduce from $\SAT$ to $\somerepairB$ and $\REPB$ simultaneously for a fixed set $B$ of constraints.
	Let $\varphi \dfn \{c_i\mid {i\leq m}\}$ be a propositional formula in conjunctive normal form over propositions $X= \{p_1,\ldots, p_n\}$.
	We construct a database with two relations\footnote{We find it convenient to write explicit attribute names for each relation.}: 
	 {(i) a $4$-ary relation $F(t_0,u_0,t_1,u_1)$, and (ii) a binary relation $C(t_2,u_2)$.}
	Intuitively, $F$ encodes the positive and negative participation of propositions in the clauses of $\varphi$. 
	Precisely, a fact $F(x,v,c_i,\text{sat})$ encodes that the proposition $x$ appears positively in clause $c_i$ when $v=1$, and negatively when $v=0$. The value ``sat'' appears in each fact to enforce a non-empty set of facts in a repair.
	The relation $C$ is used to encode the set of clauses. 
	With the help of IDs, we aim to enforce that if any $C$-fact belong to a repair, then all of the $C$-facts (a ``chain'') must belong to such a repair, which in turn forces (via further IDs) a repair of $\calT$ to include $F$-facts in such a way that their corresponding literals satisfy every clause.
	
	Formally, we will define $F$ and $G$ such that $\dom(F)\subseteq\{c_i \mid i\leq m\}\cup\{p_1,\ldots, p_n\}\cup\{0,1,\text{sat}\}$ and $\dom(C)=\{c_i \mid i\leq m\}\cup\{\text{sat}\}$.
	Our database consists of the following facts:
	%
	\begin{align*}
		\calT  = \quad & \{F(x,1,c_i,\text{sat}) \mid x \in c_i, c_i\in\varphi\}\cup \{F(x,0,c_i,\text{sat}) \mid \neg x \in c_i, c_i\in\varphi\} \\
		& \cup  \{F(\text{sat}, \text{sat}, \text{sat}, \text{sat})\} \cup 
		\{C(\text{sat}, c_1), C(c_m, \text{sat})\} \\
		& \cup \{C(c_i, c_{i+1}) \mid 1\leq i< m\}.
	\end{align*}
	%
	
	As ICs, we consider a collection $B$ of FDs and IDs over the schema $\{F,C\}$ using $V \dfn \{t_j,u_j \mid 0\leq j \leq 2\}$ as the set of  {attributes}\footnote{Recall that our notation for FDs and IDs uses attributes as names for the corresponding argument positions of the relation (here $F(t_0,u_0,t_1,u_1)$ and $C(t_2,u_2)$).}.
	Precisely, we let 
	$$B= \{\depas{t_0}{u_0}\} \cup \{F[u_1]\subseteq C[u_2], C[u_2]\subseteq C[t_2], C[t_2]\subseteq F[t_1], F[u_1]\subseteq F[t_1]\}.$$

	The FD $\depas{t_0}{u_0}$ over $F$ ensures that each proposition (corresponding to facts in a repair) takes at most one value in $\{0,1\}$.
	The IDs in $B$ ensure that each clause $c_i\in\varphi$ is satisfied by modeling the following intuition.
	(I) The ID $F[u_1]\subseteq C[u_2]$ together with a dummy fact $F(\text{sat}, \text{sat}, \text{sat}, \text{sat})$ (referred to as $s_d$) encodes that any repair for $\calT$ containing $s_d$ triggers a non-empty subset of $C$-facts in a repair. 
	(II) The ID $C[u_2]\subseteq C[t_2]$ triggers a chain of $C$-facts in a repair and thus enforces that either $\emptyset$, or all $C$-facts are considered in a repair.
	(III) The ID $C[t_2]\subseteq F[t_1]$ encodes that all clauses must be satisfied if a repair contains any $C$-fact. 
	Moreover, (IV) the ID $F[u_1]\subseteq F[t_1]$ is used in the reduction for $\REPB$ to enforce that a valid repair must contain the dummy fact $s_d$.
	Notice that the collection $B$ does not depend on the formula $\varphi$, hence it is fixed. 

	We next highlight some observations before delving into the correctness proof.
	Clearly, $\calD$ is inconsistent due to the presence of at least one pair of facts $F(x,1,c_i,\text{sat}), F(x,0,c_j,\text{sat})$ and the FD $\depas{t_0}{u_0}$ (based on the fact that at least one variable $x$ appears positively in one clause $c_i$ and negatively in another clause $c_j$, since otherwise $\varphi$ is trivially satisfiable).
	Then, a set $P_X$ of $F$-facts satisfies $\depas{t_0}{u_0}$ if for each proposition $x\in X$: $P_X$ contains facts of the form either $F(x,1,*,\text{sat})$ or $F(x,0,*,\text{sat})$ but not both (here $*$ indicates the occurrence of any value in a fact). 
	For $P_X$ to be a candidate repair among $F$-facts, it additionally has to contain facts of the form $F(x,*,*,*)$ for each $x\in X$.
	Moreover, a set $P_c$ of $C$-facts satisfies the ID $C[u_2]\subseteq C[t_2]$ iff either $P_c=\emptyset$ or $P_c$ contains all $C$-facts.
	Next, we consider a set $P_X$ of $ F$-facts containing the fact $s_d$ and let $\calP= P_X \cup P_c$. 
	It follows that $\calP\models F[u_1]\subseteq C[u_2]$ iff $P_c\neq \emptyset$ 
	and hence we let $P_c$ contain all $C$-facts.
	Finally, $\calP\models C[t_2]\subseteq F[t_1]$ iff for each $c_i\in\varphi$, $P_X$ contains at least one fact $F(x,v,c_i,\text{sat})$ 
	such that $v=1$ if $x\in c_i$ and $v=0$ if $\neg x\in c_i$.
	This is guaranteed since $\calP$ contains facts of the form $C(*, c )$ for each clause $c\in\varphi$ and thus there must exists $F$-facts in $\calP$ also of the form $F(*,*,c,*)$ for each clause $c\in \varphi$.
	We prove the correctness via the following claim.
	\begin{claim}
		$\varphi$ is satisfiable if and only if $\somerepair$ is true for the fact $s_d$.
	\end{claim}
	\textbf{Proof of Claim}
	``$\Longrightarrow$''. Suppose $\varphi$ is satisfiable and let $\theta$ be a satisfying assignment.
	We let $\calP \dfn \{F(x,1,c_i,\text{sat}) \mid \theta(x)=1, \text{ for } x\in X \text{ and } i\leq m\} \cup \{F(x,0,c_i,\text{sat}) \mid \theta(x)=0, \text{ for } x\in X \text{ and } i\leq m\} \cup \{s_d\} \cup P_c$ where $P_c$ contains all $C$-facts in $\calT$.
	Then $\calP\models \depas{t_0}{u_0}$
	because for any $x\in X$: $\calP$ contains either facts of the form $F(x,1,*,*)$ or $F(x,0,*,*)$, since $\theta$ is an assignment over $X$.
	Next, we have that 
	(I) $\calP\models F[u_1]\subseteq C[u_2]$ since $P_c \neq \emptyset$, and 
	(II) $\calP\models C[u_2]\subseteq C[t_2]$ since $\calP$ contains all $C$-facts.
	Finally, $\calP$ contains at least one fact of the form $F(*,*,c,*)$ for each clause $c\in \varphi$ since $\theta(\ell)=1$ for at least one literal $\ell\in c$ in each $c\in \varphi$.
	This holds because for every clause $c\in \varphi$, $\calP$ contains a fact $F(x,v, c, \text{sat})$ for some variable $x$ appearing in $c$.
	The value $v$ in this fact depends on how the assignment $\theta$ satisfies $c$, that is, $v=1$ if $\theta(x)=1$ makes $c$ true, and $v=0$ if $\theta(x)=0$ does.
	As a consequence, $\calP\models C[t_2]\subseteq F[t_1]$. 
	Observe that $\calP\models F[u_1]\subseteq F[t_1]$ follows due to the fact $s_d\in \calP$.
	To see why $\calP$ is a repair (i.e., subset-maximal), we note that the only facts not in $\calP$ are of the form $F(x,v,c,\text{sat})$ for some $v\in\{0,1\}$ and $c\in\varphi$ such that $F(x,1-v,c,\text{sat})\in \calP$ due to the way $\calP$ is defined.
	However, adding such a fact would violate the FD. Therefore, $\calP$ is indeed a repair.
	We conclude by observing that $s_d \in \calP$.
	
	``$\Longleftarrow$''.
	Let $\calP$ be a repair and $s_d \in \calP$.
	From $\calP\models F[u_1]\subseteq C[u_2]$, it follows that $\calP$ must contain at least one $C$-fact as $s_d\in\calP$, and from $\calP\models F[u_1]\subseteq C[u_2]$, it follows that $\calP$ actually contains all $C$-facts.
	We define an assignment $\theta$ over $X$ by setting $\theta(x)= 1$ if $\calP$ contains $F(x,1,*,*)$ and $\theta(x)= 0$ if it contains $F(x,0,*,*)$.
	Since $\calP\models \depas{t_0}{u_0}$, the assignment $\theta$ is well-defined, i.e., there is no $x\in X$ for which we assign both $\theta(x)=1$ and $\theta(x)=0$.
	Now, $\theta$ is defined for all the propositions in $X$ since $\calP$ is subset-maximal (hence, it contains fact of the form either $F(x,0,*,*)$ or $F(x,1,*,*)$ for each $x\in X$).
	Now, we prove that $\theta\models\varphi$.
	To this aim,  let $c\in\varphi$ be any clause.
	Since $\calP\models C[t_2]\subseteq F[t_1]$, $\calP$ contains at least one fact of the form $F(x,v,c,\text{sat})$ for some variable $x$ and $v\in\{0,1\}$, such that $v=1$ if $x\in c$ and $v=0$ if $\neg x\in c$ (by definition of how $F$-facts are constructed).
	Therefore, we must have set $\theta(x)=1$ if $v=1$ (hence $x\in c$) and $\theta(x)=0$ if $v=0$ ($\neg x\in c$). 
	As a result, $\theta\models c$ for each $c\in\varphi$.
	This proves $\theta\models \varphi$ and our claim follows. $\hfill\blacksquare$
	
	To reduce $\SAT$ into $\REPB$, we observe that $\calP\models F[u_1]\subseteq F[t_1]$ iff either $\calP$ contains no $F$-fact, or it contains the fact $s_d$.
	However, there cannot be a repair for $\calD$ without any $F$-fact but still containing $C$-facts, since it violates the ID $C[t_2]\subseteq F[t_1]$.
	As a result, one cannot construct a non-empty repair for $\calD$ by excluding $s_d$. 
	Therefore, every repair $\calP$ for $\calD$ necessarily contains $s_d$, thereby proving the equivalence as before.
	In other words, there is a non-empty repair for $\calD$ iff $\varphi$ is satisfiable.
	
	This completes the proof for both cases.
\end{proof}

We provide an example for better understanding of the reductions from the proof of Theorem~\ref{thm:cred-both-fixed}.
\begin{example}[Example with Fixed ICs]\label{ex:red-both-fixed}
	Let $\varphi \dfn \{x \lor y, \neg x \lor \neg y, \neg x \lor y\}$ be a propositional formula.
	Our reduction for $\somerepair$ gives a database $\calT$ with two tables $F$ and $C $ for the collection $B\dfn \{\depas{t_0}{u_0}\}\cup \{F[u_1]\subseteq C[u_2], C[u_2]\subseteq C[t_2], C[t_2]\subseteq F[t_1]\}$ of dependencies. 
	
	Observe that the only satisfying assignment for $\varphi$ is given by $\{x\mapsto 0, y\mapsto 1\}$, which corresponds to the  repair $\{S, \bar x_2, \bar x_3, y_1, y_3\}\cup \{s_c,s_1,s_2,s_3\}$ for $\calD$ containing $S$. Consequently $\somerepair$ is true for the fact $s_d$.
	\begin{table}[t]
		\centering
		\begin{tabular}{l@{\hskip5pt} |@{\hskip5pt} cc @{\hskip5pt}| @{\hskip5pt}c @{\hskip5pt}| @{\hskip5pt} c}
			$F$ & $t_0$ & $u_0$  & $t_1$ & $u_1$\\ \hline
			$s_d$ & sat & sat  & sat & sat\\ \hline
			$x_1$ & $x$ & 1  & $c_1$ & sat\\
			${\bar x_2}$ & $x$ & 0 & $c_2$ & sat \\
			${\bar x_3}$ & $x$ & 0 & $c_3$ & sat \\ \hline 
			$y_1$ & $y$ & 1  & $c_1$ & sat \\
			${\bar y_2}$ & $y$ & 0 & $c_2$ & sat \\
			$y_3$ & $y$ & 1  & $c_3$ & sat \\
		\end{tabular}
		\hspace{1cm}
		\begin{tabular}{l@{\hskip5pt} |@{\hskip5pt} cc}
			$C$& $t_2$  & $u_2$  \\\hline
			$s_c$ & sat  & $c_1$ \\
			$s_1$ & $c_1$  & $c_2$ \\
			$s_2$ & $c_2$  & $c_3$ \\ 
			$s_3$ & $c_3$  & sat \\
		\end{tabular}
		\caption{The database corresponding to the formula $\varphi$ from Example~\ref{ex:red-both-fixed}.}
		\label{tab:red-both-fixed}
	\end{table}
\end{example}

Next, we prove that $\allrepair$ is even harder and $\PiP$-complete.

\begin{theorem}\label{thm:skep-both}
	The problem $\allrepair$ is contained in $\PiP$ for FDs and IDs.
	Moreover, there exists a set $B$ of FDs and IDs such that the problem 
	$\allrepairB$ is 
	$\PiP$-hard.
\end{theorem}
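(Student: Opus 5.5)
For membership I will follow the proof of Theorem~\ref{thm:skep-both-extended}, now with the constraints as part of the input. The complement of $\allrepair$ asks for a repair $\calP$ with $s\notin\calP$. I guess such a $\calP\subseteq\calT$. Checking $\calP\models B$ takes polynomial time in the combined setting, since FDs are checked pairwise and IDs by a projection comparison. Maximality is a $\coNP$ condition: $\calP$ is a repair iff there is no $\calP'$ with $\calP\subsetneq\calP'\subseteq\calT$ and $\calP'\models B$. This can be decided with one $\NP$-oracle call. The complement is therefore in $\SigmaP$, so $\allrepair\in\PiP$.

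For hardness I reduce from $\forall\exists$-$\SAT$ with $\Phi=\forall X\exists Y\,\varphi(X,Y)$, $\varphi$ in CNF, reusing the gadget from the proof of Theorem~\ref{thm:cred-both-fixed}. I keep the relations $F$ and $C$, the FD $\depas{t_0}{u_0}$ and the IDs $F[u_1]\subseteq C[u_2]$, $C[u_2]\subseteq C[t_2]$, $C[t_2]\subseteq F[t_1]$, all fixed.

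Universal variables must be chosen freely and cannot be "repaired away". So for each $x\in X$ I add facts $F(x,1,\text{sat}',\text{sat}')$ and $F(x,0,\text{sat}',\text{sat}')$ that carry no ID obligation; their FD conflict forces every repair to pick exactly one value for $x$. For existential variables and clause occurrences I use the literal facts as before. I then add a fresh fact $s^\star$ (for instance in a new unary relation $G$) that is in conflict with the whole "satisfied" chain. The intended behaviour is: every repair contains either the clause chain with $s_d$, witnessing $\varphi$ under the chosen $X$-assignment, or $s^\star$. For the conflict I would try an FD on a widened $C$ or $F$ table, e.g.\ giving $s^\star$ the key value ``sat'' with a different second attribute than $s_d$, so that $s_d$ and $s^\star$ violate one fixed FD. Then $\allrepair$ for $s^\star$ should be false iff some repair contains $s_d$.

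This gives the wrong polarity, so I will design the reduction around the complement. For each assignment $\theta_X$ there is a repair that contains $s^\star$ together with $X$-facts matching $\theta_X$. This repair is maximal exactly when adding $s_d$ and the chain cannot be done consistently, i.e.\ when $\varphi(\theta_X,Y)$ is unsatisfiable. Hence $s_d$ is in every repair iff $\Phi$ is true, which makes $\allrepair$ for $s_d$ the right target.

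The main obstacle is this maximality argument. Adding $s_d$ alone to the $s^\star$-repair already breaks an ID, so maximality only fails if a whole consistent superset exists, while subset-maximality is only local to single additions. I must therefore check that any consistent superset $\calP'\supset\calP$ containing some chain or literal fact is forced, through the IDs, to contain $s_d$ and the full chain. Such a $\calP'$ must also keep $\calP$'s $X$-facts and drop $s^\star$, which is impossible since $\calP'\supseteq\calP$. I would need to redesign the conflict so that $s^\star$ does not clash with $s_d$ directly. Instead, the universal facts can be tied to the chain through IDs, so that a chain extension compatible with the fixed $X$-facts exists iff $\varphi(\theta_X,\cdot)$ is satisfiable. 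Getting this interplay right with only FDs and IDs over a fixed schema is where I expect most of the work.
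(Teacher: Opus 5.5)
Your membership argument is correct and matches the paper's: guess $\calP$ with $s\notin\calP$, check $\calP\models B$ in polynomial time, and verify maximality with one $\NP$-oracle call.

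The hardness part has a genuine gap, because the proposal never arrives at a working reduction. The design with a fresh fact $s^\star$ in FD-conflict with $s_d$ fails, as you notice yourself. Every superset of a repair containing $s^\star$ still contains $s^\star$, so $s_d$, the chain and all literal facts can never be added. Hence $s^\star$ together with any selector choice is a repair for \emph{every} universal assignment, independently of $\varphi$. Consequently $s_d$ is never in all repairs, and querying $s^\star$ only encodes unsatisfiability of $\varphi$ as a whole. So the quantifier alternation is lost, not merely the polarity.

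Your closing remedy, ``tie the universal facts to the chain through IDs'', is not a construction. Taken literally, it also points the wrong way: if universal facts depend on the chain via IDs, a repair can drop them together with the chain. The universal assignment is then no longer fixed before the existential part is tested.

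There is also a smaller inaccuracy. Under the fixed ID $F[u_1]\subseteq C[u_2]$, your selectors $F(x,v,\text{sat}',\text{sat}')$ are not obligation-free. They need a $C$-fact with $u_2=\text{sat}'$, and that fact in turn needs support via $C[u_2]\subseteq C[t_2]$ and $C[t_2]\subseteq F[t_1]$.

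The missing idea is to decouple the two quantifier blocks. The universal facts must be supported by a self-contained dummy gadget that is independent of the clause chain. The existential facts and the chain must hang on a single gate fact, and that gate is the query. The paper does this with a third relation $S(u_1,u_\exists)$ and the extra ID $F[t_\exists]\subseteq S[u_\exists]$:
\begin{itemize}
\item Universal literal facts carry $t_\exists=d$ and are supported by the self-supporting dummies $F(d,d,d,d)$, $S(d,d)$, $C(d,d)$.
\item Existential literal facts carry $t_\exists=\text{exists}$, so they need $s_{\text{sat}}=S(c_1,\text{exists})$.
\item Via $S[u_1]\subseteq C[u_2]$ and $C[u_2]\subseteq C[t_2]$, the fact $s_{\text{sat}}$ forces the cyclic chain $C(c_1,c_2),\dots,C(c_m,c_1)$. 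Via $C[t_2]\subseteq F[t_1]$, the chain forces every clause to be covered.
\end{itemize}
By maximality and the FD, every repair contains the dummies and all literal facts of one complete universal assignment $\theta$. A repair omitting $s_{\text{sat}}$ therefore exists iff some $\theta$ cannot be extended, i.e., iff $\Phi$ is false. So the paper queries $s_{\text{sat}}$.

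Your own two-relation setup can be rescued along the same lines:
\begin{itemize}
\item Drop $s^\star$.
\item Add the self-supporting dummies $F(\text{sat}',\text{sat}',\text{sat}',\text{sat}')$ and $C(\text{sat}',\text{sat}')$, so that the selectors really are free.
\item Query $s_d$.
\end{itemize}
Then every repair contains exactly one selector per universal variable. A repair without $s_d$ contains no chain or literal facts, since they all require $C(c_m,\text{sat})$, hence the whole chain, hence $s_d$. Such a repair can be extended iff $\varphi(\theta_X,\cdot)$ is satisfiable, because the FD forces the literal facts of universal variables to agree with the chosen selectors.
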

\begin{proof}
	The membership follows analogous to the proof of  Theorem~\ref{thm:skep-both-extended}.
	The only difference here is that one can check in polynomial time whether a set of facts satisfies each IC in the input.
	As a result, the membership holds for $\allrepair$ (i.e., in the combined complexity).
	
	For hardness, we build on the same idea as in the proof of Theorem~\ref{thm:cred-both-fixed}.
	Here,  we reduce from an instance $\Phi$ of the $\PiP$-complete problem $\tqbf$.
	The problem $\tqbf$ is defined as follows: given a propositional formula $\Phi = \forall Y \exists Z \varphi(Y,Z)$ where $\varphi \dfn \{c_i \mid 1\leq i\leq m\}$ is a $\CNF$ formula, determine whether $\Phi$ is true (i.e., whether for every assignment to the variables in $Y$, there exists an assignment to the variables in $Z$ that satisfies $\varphi$).
	We let $X=Y\cup Z$ and construct a constrained database $\calD$ with three relations: (i) a $4$-ary relation $F(t_0,u_0,t_1,t_\exists)$, (ii) a binary relation $S(u_1,u_\exists)$, and another (iii) binary-relation $C(t_2,u_2)$.
	Moreover, we consider a collection $B$ of FDs and IDs over the schema of $\calT$ using a set $V \dfn \{t_j,u_j \mid 0\leq j\leq 2 \}\cup \{t_\exists,u_\exists\}$ of  {attributes}.
	$B$ contains one FD $\depas{t_0}{u_0}$ similar to the proof of Theorem~\ref{thm:cred-both-fixed}. 
	Due to the change in database relations, we adapt the collection of IDs and consider the set $\{S[u_1]\subseteq C[u_2], C[u_2]\subseteq C[t_2],  C[t_2]\subseteq F[t_1]\}$ to encode whether each clause $c_i\in \varphi$ is satisfied.
	Moreover, we require an additional ID $F[t_{\exists}]\subseteq S[u_{\exists}]$ to encode the existentially quantified variables $Z$.
	As a result, we obtain $B= \{\depas{t_0}{u_0}\}\cup \{ S[u_1]\subseteq C[u_2], C[u_2]\subseteq C[t_2], C[t_2]\subseteq F[t_1], F[t_{\exists}]\subseteq S[u_{\exists}]\}$ as our fixed collection of ICs.
	
	For the database relations, we let $\dom(F)\subseteq\{c_i \mid i\leq m\}\cup X\cup\{0,1,\text{exists}, d\}$,  $\dom(C)=\{c_i \mid i\leq m\}\cup\{d\} $ and $\dom(S)=\{c_1, \text{exists}, d\}$.
	Then our encoding works mostly similar to that in the proof of Theorem~\ref{thm:cred-both-fixed}, except for the minor differences outlined below.
	\begin{enumerate}
		\item We include an $S$-fact to trigger a non-empty set of $C$-facts in any repair. We obtain this by adding the fact $S(c_1, \text{exists})$ together with 
		the fact $C(c_{m}, c_1)$.
		Here, we do not need the fact $C(\text{sat},c_1)$ as in the proof of Theorem~\ref{thm:cred-both-fixed}. 
		The remaining $C$-facts $\{C(c_i, c_{i+1})\mid i< m\}$ are added as before.
		\item We use $s^d\dfn S(d,d), y^d\dfn F(d,d,d,d)$ and $c^d\dfn C(d,d)$ as dummy facts to simulate the effect that any assignment over $Y$ yields a collection of facts in $\calT$ that \emph{trivially} satisfies all the ICs. This is required to simulate the $\forall$ quantification over $Y$.
		\item 
		If a literal $\ell \in \literal(Y)$ does not appear in even a single clause, we still add its corresponding $F$-fact following the same intuition 
		as before, i.e., for such a literal $\ell$: we add 
		$F(y, 1, d, d)$
		if $\ell =y$ and $F(y, 0, d, d)$ if $\ell = \neg y$.
		Moreover, the dummy value $d$ in this case encodes that $\ell$ does not satisfy any clause.
		As a result of this, we can connect repairs to all assignments over universally quantified variables in $Y$.
		For the ease of notation, we write $\ell \in c_0$ for a literal over $Y$ that does not appear in any clause.
		\item The attributes $\{t_{\exists}, u_{\exists}\}$ encode that the $S$-fact $s_{\text{sat}}\dfn S(c_1,\text{exists})$ \emph{supports} facts involving propositions $z\in Z$ via the inclusion dependency $F[t_{\exists}] \subseteq S[u_{\exists}]$.
		This is achieved by letting facts
		$F(z,*,*,\text{exists})$
		for each $z \in Z$ together with the fact $s_{\text{sat}}$.
		However, all facts corresponding to variables  $y\in Y$ take the form $F(y,*,*,d)$ for the dummy value $d$.
		
	\end{enumerate}
	
	Observe that the construction in (3) and (4) requires us to distinguish facts based on variables in $Y$ and $Z$.
	Formally, our database consists of the following facts:
	\begin{align*}
		\calT  = \quad & \{F(y,1,c_i,d) \mid y \in Y \cap c_i, c_i\in\varphi\}\cup \{F(y,0,c_i,d) \mid y\in Y, \neg y \in c_i, c_i\in\varphi\} \\
		& \cup \{F(y,1,d,d) \mid \text{ no clause in } \varphi \text{ contains the literal } y\in Y\} \\
		& \cup \{F(y,0,d,d) \mid \text{ no clause in } \varphi \text{ contains the literal } \neg y \text{ for } y\in Y\} \\
		&\cup ´ \{F(z,1,c_i,\text{exists}) \mid z \in Z\cap c_i, c_i\in\varphi\}\cup \{F(z,0,c_i,\text{exists}) \mid z\in Z, \neg z \in c_i, c_i\in\varphi\} \\
		& \cup  \{F(d,d,d,d)\} 
		\cup \{S(c_1,\text{exists}), S(d,d))\} \\
		& \cup \{C(c_i, c_{i+1}) \mid 1\leq i< m\} \cup 
		\{C(c_m, c_1), C(d,d)\}.
	\end{align*}
	
	For correctness, notice that every assignment $I_Y$ over $Y$ (seen as a subset of $Y$) corresponds to a set $P_Y =\{ F(y,1,c_i, d) \mid y\in I_Y \text{ and } y\in c_i \text{ for } 0\leq i\leq m \} \cup \{F(y,0,c_i, d) \mid y \not \in I_Y \text{ and } \neg y\in c_i \text{ for } 0\leq i\leq m\}$ of $F$-facts in $\calT$.
	Moreover, $P_Y\models \depas{t_0}{u_0}$ since $I_Y$ is an assignment over $Y$ and thus for each $i\neq j\leq m$, $P_Y$ includes facts of the form either $F(y,1,c_i, d)$ or $F(y,0,c_j, d)$ but not both, for each variable $y\in Y$ and clause $c_i, c_j$.
	Furthermore, it is easy to observe that taking the dummy facts, i.e., $P_D = \{y^d, s^d, c^d\}$ the collection $P_Y\cup P_D$ satisfies the IDs $S[u_1]\subseteq C[u_2]$ and  $C[u_2]\subseteq C[t_2]$.
	Now, we let $\calP_Y = P_Y\cup P_D$.
	We have that $s_{\text{sat}} \not \in \calP_Y$.
	In order to extend $\calP_Y$ by adding $F(z,1,c_i, \text{exists})$ or $F(z,0,c_i, \text{exists})$ for any $z\in Z$ and $i\leq m$, $s_{\text{sat}}$ must be added as well due to the ID $F[t_{\exists}]\subseteq S[{u_{\exists}}]$.
	However, in order to include $s_{\text{sat}}$ in any repair, we have to find $I_Y$ and $I_Z$ that together satisfy $\varphi$ due to the remaining IDs, in particular due to $C[t_2]\subseteq F[t_1]$. 
	As a result, for any interpretation $I_Y$ over $Y$, there is an interpretation $I_Z$ over $Z$ such that: $I_Y \cup I_Z \models \varphi$ if and only if $\calP_Y$ is not a repair for $\calD$ (since it can be extended by adding facts $P_Z =\{ F(z,1,c_i, d) \mid z\in I_Z \text{ and } z\in c_i \text{ for } 0\leq i\leq m \} \cup \{F(z,0,c_i, d) \mid z \not \in I_Z \text{ and } \neg z\in c_i \text{ for } 0\leq i\leq m\}$ for such an interpretation $I_Z$ in this case). 
	Equivalently, there is a repair for $\calD$ not containing $s_{\text{sat}}$ if and only if the formula $\Phi$ is false.
	We conclude by observing that $\Phi$ is true if and only if every repair for $\calD$ contains $s_{\text{sat}}$ if and only if $\allrepair$ is true for $s_{\text{sat}}$.
\end{proof}

We provide an example for better understanding of the reduction from the proof of Theorem~\ref{thm:skep-both}.

\begin{example}\label{ex2:red-both}
	Let $\Phi = \forall x\forall y \exists z\exists w((x \lor y \lor z) \land (y \lor \neg z \lor \neg w) \land (y \lor z \lor w))$ be a $\tqbf$.
	Then, our reduction yields a database $\calT$ with three database tables as depicted in Table~\ref{tab2:red-both} and the (fixed) collection 
	$B\dfn \{\depas{t_0}{u_0}\}\cup \{ S[u_1]\subseteq C[u_2], C[u_2]\subseteq C[t_2], C[t_2]\subseteq F[t_1], F[t_{\exists}]\subseteq S[u_{\exists}]\}$ of dependencies.
	The reader can verify that the formula $\Phi$ is true and that for $s_\text{sat}$, $\allrepair$ is true as well.
	For instance, the assignment $\{x\mapsto0, y\mapsto 0\}$ results in $I_Y=\emptyset$ and thus $P_Y= \{\bar x_0, \bar y_0\}$.
	Then, although $P_Y\cup\{y^d, s^d, c^d\}$ satisfies each dependency in $B$, it is not a repair since it is not subset-maximal.
	A repair is obtained by adding further the facts $\{z_1,z_3, \bar w_2\}\cup \{S,s_1,s_2,s_3\}$.
	\begin{table}[t]
		\centering
		\begin{tabular}{l@{\hskip5pt} |@{\hskip5pt} c @{\hskip5pt}| @{\hskip5pt}c}
			$S$& $u_1$  & $u_\exists$  \\\hline
			$s_{\text{sat}}$ & $c_1$  & exists \\
			$s^d$ & $d$ & $d$ \\
		\end{tabular}
		\hspace{1cm}
		\begin{tabular}{l@{\hskip5pt}|@{\hskip5pt} cc@{\hskip5pt}|@{\hskip3pt}c @{\hskip5pt}|@{\hskip3pt} c}
			$F$ & $t_0$ & $u_0$  & $t_1$  & $t_\exists$ \\ \hline 
			$y^d$
			& $d$ & $d$  & $d$ & $d$  \\ \hline 
			$x_{1}$ & $x$ & 1  & $c_1$  & $d$\\ 
			$\bar x_{0}$ & $x$ & 0  & $d$  & $d$\\ \hline 
			$y_{1}$ & $y$ & 1  & $c_1$ & $d$ \\
			$y_{2}$ & $y$ & 1  & $c_2$ & $d$ \\ 
			$y_{3}$ & $y$ & 1  & $c_3$ & $d$ \\ 
			$\bar y_{0}$ & $y$ & $0$  & $d$ & $d$\\ \hline 
			$z_{1}$ & $z$ & 1  & $c_1$ & exists \\
			$\bar z_{2}$ & $z$ & $0$ & $c_2$ & exists \\
			$z_{3}$ & $z$ & 1  & $c_3$ & exists \\ \hline
			$\bar w_{2}$ & $w$ & $0$ & $c_2$ & exists\\ 
			$w_{3}$ & $w$ & 1  & $c_3$ & exists \\
		\end{tabular}
		\hspace{1cm}
		\begin{tabular}{l@{\hskip5pt} |@{\hskip5pt} c @{\hskip5pt}| @{\hskip5pt}c}
			$C$& $t_2$  & $u_2$  \\\hline
			$s_1$ & $c_1$  & $c_2$ \\
			$s_2$ & $c_2$  & $c_3$ \\ 
			$s_3$ & $c_3$  & $c_1$ \\
			$c^d$ & $d$ & $d$ \\
		\end{tabular}
		\caption{The database corresponding to the $\tqbf$ instance $\Phi$ from Example~\ref{ex2:red-both}.}
		\label{tab2:red-both}
	\end{table}
\end{example}

\section{Concluding Remarks}
\paragraph{Overview.}
We 
simulated the problem of finding repairs of an inconsistent database under various families of ICs by Dung's (set-based) argumentation frameworks.
Our main results (see Table~\ref{table:cont}) indicate that subset-maximal repairs correspond to naive extensions when only one type of dependencies are allowed, whereas only preferred extensions yield all the repairs when both types are allowed.
Note that the case of LTGDs (or IDs) requires a polynomial time pre-processing to eliminate all the non-accepted arguments from the resulting (SET)AFs. 
The result of which yields a unique preferred extension which is also stable and naive.

Moreover, the following interesting facts can be derived: ``FDs can be modeled via AFs'' in a similar way in which ``DCs can be modeled via SETAFs''.
Analogous results hold when we replace FDs and DCs by (1) IDs  and LTGDs, respectively, or (2) FDs + IDs and DCs + LTGDs, respectively. 
Furthermore, for the problem to determine whether a tuple is in some (resp., every) repair, we establish the same complexity bounds as the complexity of credulous (skeptical) reasoning for preferred semantics in AFs.
Interestingly, the combined complexity with FDs+IDs remains the same as the data complexity with DCs+LTGDs for considered problems.

\subsection{The Role of Self-attacking Arguments}\label{sec:self-attacks}

Observe that our construction of (SET)AFs for LTGDs and IDs relies on self-attacking auxiliary arguments.
Our motivation lies in avoiding those arguments in any extension of the resulting (SET)AF and connecting repairs to extensions directly.
In the absence of self-attacking arguments, extensions may contain auxiliary arguments and therefore a precise connection between repairs and extensions cannot be established (see Example~\ref{without-selfloops}).

We find it worth highlighting that a considerable thought has been given on how LTGDs/IDs can be modeled in an argumentation framework. 
Intuitively, the semantics of inclusion dependencies closely resemble the notion of ``necessary support'' in the argumentation literature~\cite{AmgoudCLL08,NouiouaR11,DBLP:conf/aaai/0002HN25}. 
If a DB instance contains only one inclusion dependency ($i$) per relation, we can simply model the set of supporting facts in $\support{i}{s}$ for a fact $s$ via the \emph{support} relation between arguments.
However, we observed that this does not apply to the case when an instance $\calD$ contains multiple IDs with the same relation as their source. 
Modeling this requires as many (distinct) support relations as the number of IDs.
This holds since for each fact $s$, one would have to distinguish the satisfaction of individual IDs (i.e., the sets $\support{i}{s}$ and $\support{j}{s}$ for IDs $i\neq j$) which a single support relation cannot offer.
Our translation captures the same essence of supporting arguments via auxiliary attackers, with the benefit that one can distinguish supporters and attackers corresponding to each inclusion dependency. 
As a future work, we aim to think further in this direction and explore ideas that might lead to improved translations for IDs and LTGDs.

\subsection{Discussion and Future Work}
We would like to point out that, although a correspondence between subset-maximal repairs in the presence of functional dependencies (resp., denial constraints) and extensions for (SET)AFs is known~\cite{BienvenuB20}, the main contributions of our work establish the correspondence when inclusion dependencies or LTGDs are also allowed.
This novel contribution opens up several directions for future work.
First and foremost, the authors believe that the connection between repairs in the setting of inconsistent databases and extensions in AFs is stronger than what is established here.
Intuitively, one can model the attack relationship via functional dependencies and defense/support via inclusion dependencies.
A transformation based on this intuition was recently presented by \cite{DBLP:conf/aaai/0002HN25}, showing that AFs can be simulated via inconsistent databases considering FDs and IDs.

Further future work may consider whether the connection between inconsistent databases and argumentation can be generalized to other well-known types of expressive ICs.
This is particularly interesting keeping in mind that the current paper extends an earlier work for FDs and IDs~\cite{mahmood2024computing} to certain expressive ICs towards the \emph{both side} of Figure~\ref{fig:ICs}.
Currently, the authors believe that to model arbitrary (or full) tuple-generating dependencies, one needs the so-called hyper argumentation frameworks~\cite{dimopoulos2023sets}, allowing attacks between two sets of arguments. 
Furthermore, one can also target the richer setting of universal constraints and the symmetric difference repairs~\cite{BienvenuB23}.
%
%
Another interesting question is to explore whether the lower bounds for complexity also apply to the sub-classes of FDs and IDs, namely \emph{keys} and \emph{foreign keys} as well as the particular case of acyclic dependencies.
Moreover, we would like to explore whether consistent query answering (CQA) under inconsistency-tolerant semantics can also be tackled via the argumentation approach.
Finally, one can consider incorporating information about priorities among tuples into the resulting AFs, that is, extending the translations presented in this work to the setting of prioritized repairing and consistent query answering~\cite{FaginKK15,kimelfeld-17,KimelfeldLP20}. 
Here, \cite{BienvenuB20} has already considered FDs and DCs, thus the question remains open only for the case of IDs and LTGDs.

Another promising direction to consider next is the exploration of an explainability dimension, similar to that in the setting of ontological KBs~\cite{AriouaC16,AriouaTC15,bienvenu2019computing}.
Given an instance $\calD$ including a database $\calT$ and a collection $B$ of dependencies, then the proposed AF $\AF{F}{D}$ lets one determine the \emph{causes} why some tuples are not in some repair (or all repairs).
We note that, for IDs, the auxiliary arguments modelling each dependency in $B$ can serve this purpose.
For FDs, we believe that annotating arguments (or the attack relation between a pair of arguments) by the FDs involved in the conflict can achieve the goal.
As a result, one can look at the AF $\AF{F}{D}$ and read from it the FDs or IDs which a tuple $s$ failing $\somerepair$ or $\allrepair$ participates in.
Then, subsets of the atoms and/or possibly tuples in a database can be considered as \emph{explanations}.
Such explanations seem interesting in modeling scenarios where the data (database) has higher confidence than the dependencies; for example, if dependencies are mined over some part of the existing data.
An explanation then informs that the data (and hence tuples therein) should be kept, whereas dependencies need to be screened and further analyzed. 


%
\section*{Acknowledgments}
This is a pre-print to the paper accepted at the Knowledge Engineering Review journal.
Research was funded by the  German Research Foundation (DFG), grant TRR 318/3 2026 – 438445824 and VI 1045-1/1 - 432788559, 
the Ministry of Culture and Science of North Rhine-Westphalia (MKW NRW) within projects WHALE (LFN 1-04) funded under the Lamarr Fellow Network programme and
project SAIL, grant NW21-059D,
and by the German Federal Ministry of Research, Technology and Space (BMFTR) within the project KI-Akademie OWL under the grant no 16IS24057B.

\bibliographystyle{abbrv}
\bibliography{main}

\end{document}